\documentclass[unabridged,copyright,creativecommons]{eptcs_modified}
\usepackage{breakurl}             

\usepackage{graphicx}
\usepackage{amsmath, amssymb}
\usepackage{bussproofs}
\usepackage{enumerate}
\usepackage{verbatim}
\usepackage{lscape}
\usepackage{multirow}
\usepackage{wrapfig}

\usepackage{amsthm, thmtools}
\usepackage{thm-restate}
\usepackage{nameref}
\usepackage{cleveref}

\usepackage[usenames,dvipsnames]{pstricks}
\usepackage{pst-node, pst-tree}
\usepackage{epsfig}
\usepackage{pst-grad} 
\usepackage{pst-plot} 

\usepackage{multicol}

\theoremstyle{plain} 
\declaretheorem[numberwithin=section]{theorem}
\declaretheorem[numberwithin=section]{corollary}
\declaretheorem[numberwithin=section]{lemma}
\theoremstyle{definition}
\declaretheorem[style=definition,numberwithin=section]{definition}
\declaretheorem[style=definition]{example}
\declaretheorem[style=definition]{remark}

\newcommand{\turnaround}[1]{%
  \rotatebox[origin=c]{180}{\ensuremath#1}}



\newcommand{\logicpar}{\turnaround{\&}}


\title{The dagger lambda calculus}
\author
{
    Philip Atzemoglou
    \institute
    {
        Department of Computer Science,\\
        University of Oxford,\\
        Oxford, UK
    }
    \email
    {
        \includegraphics{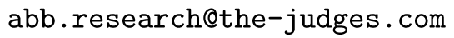}
    }
}

\begin{document}
\maketitle

\begin{abstract}
    We present a novel lambda calculus that casts the categorical approach to the study of quantum protocols \cite{AC04} into the rich and well established tradition of type theory. Our construction extends the linear typed lambda calculus \cite{AT10} with a linear negation \cite{Abr93} of "trivialised" De Morgan duality \cite{AD06}. Reduction is realised through explicit substitution, based on a symmetric notion of binding of global scope, with rules acting on the entire typing judgement instead of on a specific subterm. Proofs of subject reduction, confluence, strong normalisation and consistency are provided, and the language is shown to be an internal language for dagger compact categories.
\end{abstract}

\section{Introduction}
\label{Section:Introduction}
\subsection{Motivation}
\label{Subsection:Motivation}
Since the turn of the century, the study of quantum protocols and quantum computation has gained new momentum through the introduction of a category theoretic approach in the works of \cite{AC04} and \cite{Sel05}. This approach has primarily been using dagger compact categories. In addition to introducing categories to the study of quantum computation, however, the line of work that sprang from this approach has been instrumental in driving a new breed of diagrammatic calculi \cite{CP06, CP07, CPP10, CPP08, CD08, CD11}.

In parallel to this approach, another very prominent line of research was seen in the works of \cite{Sel04a, Sel04b, vTD03, vT04, SV06, SV08, SV10} and was geared towards the development of a quantum programming language. This approach was seminal in establishing a semantic approach to quantum programming language design and focused primarily in designing a higher order lambda calculus for quantum computation. More specifically, in \cite{SV10}, a quantum lambda calculus with a complicated set of rules is presented, whose structural equations nevertheless allow for higher-order structures. The rest of the work towards constructing a concrete model for the language's semantics remains an open problem.

The purpose of this paper is to bridge these two approaches, bringing the programming languages approach closer to the categorical approach, by casting the diagrammatic formalism into the rich and well established tradition of type theory.

\subsection{Summary of results}
\label{Subsection:Summary of results}
Since Symmetric Monoidal Closed categories are the precursor to Compact Closed and Dagger Compact categories, we begin our construction by extending the linear typed lambda calculus of \cite{AT10}. Similarly to the approach used by \cite{Abr93}, we introduce a linear negation operator. Contrary to \cite{Abr93}, however, because \textit{quantum logics} equate $\otimes$ with $\logicpar$ \cite{AD06}, our linear negation operator only allows for a "trivialised" form of De Morgan duality. We also redefine the notion of binding, as a symmetric relation whose scope spans the entire sequent. Reduction works by means of an explicit substitution, in the spirit of the operational semantics of the linear chemical abstract machine \cite{Abr93}. The rules for explicit substitution act globally on the entire typing judgement, instead of limiting their scope to a specific subterm.

By designing our calculus in this way we manage to deconstruct lambda abstraction, one of the traditional primitives of computation, into finer notions of tensor-based binding. This allows us to easily reason with binding operations, such as teleportation, even when they are performed on compound terms. The representation of those operations remains the same, regardless of whether they are teleporting a state or an entire function. A detailed example of this is presented in the end of the Appendix.

An elimination procedure allows us to reconstruct Application using Cut, hence removing it from our primitive rule set. The new rules allow for a fully symmetric language, where inputs and outputs are treated as elements of a symmetric relation, and give rise to a new structural rule called the \textit{dagger-flip}. The resulting set of rules is minimal and simple to use, which allows us to easily prove the properties of subject reduction, confluence, strong normalisation and consistency. Our analysis of the language's semantics is completed by a proof that the dagger lambda calculus is an internal language for dagger compact categories.

\section{The dagger lambda calculus}
\label{Section:The dagger lambda calculus}
Dagger compact categories were first introduced in \cite{ABP99}, albeit under a different name, using some of the terminology of \cite{DR89}. They were later proposed by \cite{AC04} and \cite{Sel05} as an axiomatic framework for the study of quantum protocols. Though a lot of work has been done on categorically driven quantum programming languages \cite{SV06}, \cite{SV08} and \cite{SV10}, these lambda calculi did not provide a way of modelling the dagger functor of dagger compact categories. The work of \cite{BS10} highlighted the importance of dagger compact categories for the semantics of quantum computation; it presented a rough correspondence between quantum computation, logic and the lambda calculus, yet its type theory fell short of providing a correspondence to the entire structure of dagger compact categories. This section fills this gap by presenting the \textit{dagger lambda calculus}: a computational interpretation for dagger compact categories.

\subsection{Language construction}
\label{Subsection:Language construction}
We will now construct a language for \textit{dagger compact categories} by defining well formed formulas for terms, types and sequents. The rules for deriving these formulas will be given in the form of Gentzen-style inference rules. In order to give computational meaning to our language, we will reformalise the typing dynamics of the linear typed lambda calculus \cite{AT10} with the explicit substitution of the linear chemical abstract machine \cite{Abr93}. The linear negation we will use causes a significant collapse between conjunction and disjunction, extends tensor to a (potentially) binding operator, and provides us with a semantics similar to that of the proof nets in \cite{AD06}. The set of rules is kept at a minimum, allowing for clean proofs of the various desired properties. Many familiar computational notions do not appear as primitives, but they do arise as constructed notions in good time.

\begin{definition}[Variables, constants and terms in the dagger lambda calculus]
The fundamental building blocks of our language are \textit{variables}; they are denoted by single letters and are traditionally represented using the later letters of the alphabet (i.e. $x,y,z$). We also allow for the use of \textit{constant terms} (i.e. $c_1,c_2,c_3$); these are terms with an inherent value and cannot serve as placeholders for substitution. These primitives can then be combined with each other to form composite \textit{terms}, denoted by different combinations of the following forms:
\[ \langle term \rangle \;\;\;\; ::= \;\;\;\; variable \;\; | \;\; \langle term \rangle_* \;\; | \;\; \langle term \rangle \otimes \langle term \rangle \;\; | \;\; constant \]
\end{definition}

\begin{definition}[Types in the dagger lambda calculus]
Every term in our language, regardless of whether it is a variable, a constant or composite, has a \textit{type}. We will first start by defining a set of \textit{atomic types}; these are traditionally represented using capital letters (i.e. $A,B,C$). Atomic types can then be combined to give us types of the following forms:
\[ \langle type \rangle \;\;\;\; ::= \;\;\;\; atomic \;\;\; | \;\;\; \langle type \rangle^* \;\;\; | \;\;\; \langle type \rangle \otimes \langle type \rangle \]
\end{definition}

The star operator that we use is not a repetition operator; instead, it corresponds to a particular form of \textit{linear negation}. As one would expect from a negation operation, the star operator is involutive $(a_*)_* \equiv a$ and $(A^*)^* \equiv A$. Abramsky \cite{Abr93} proposed using linear negation as the passageway between Intuitionistic Linear Logic and Classical Linear Logic. The linear negation used in \cite{AD06} "trivialized" the notion of De Morgan duality of \cite{Abr93} by setting $(A \otimes B)^* := A^* \otimes B^*$. The linear negation that we use is similar to the one used in \cite{CPP08}; it distributes differently over tensor by performing a swap of the terms/types at hand and allows for a more "planar" representation. An exchange rule, presented later in this section, will maintain the symmetry of the language's tensors.
\begin{definition}[Linear negation]
The star operator is a form of linear negation whose De Morgan duality is defined by: $(a \otimes b)_* := b_* \otimes a_*$ on terms and $(A \otimes B)^* := B^* \otimes A^*$ on types.
\end{definition}

\begin{definition}[Scalars]
One of the language's atomic types, denoted by $I$, acts as the tensor unit. One of the very important properties of the type $I$ is \textit{negation invariance}, whereby $I \equiv I^*$. We say that a term $i$ is a \textit{scalar} iff it is of type $I$.
\end{definition}

\begin{definition}[Dimensions]
For every type $A$, we will define a scalar constant $D_A : I$, referring to it as the \textit{dimension} of type $A$. The dimension of $I$ is defined to be $D_I = 1 : I$, where $1 = 1_* : I \equiv I^*$.
\end{definition}

\begin{definition}[Soup connection]
A \textit{soup connection} is an ordered pair of equityped terms. A soup connection between two terms of type $A$ is written as $t_1 :_A t_2$ and is an element of the cartesian product of the terms of type $A$ with themselves. To simplify our notation, we write the connection as $t_1 : t_2$, omitting the type, whenever there is no ambiguity about the type of the connected terms. Soup connections do not form a symmetric relation; we use the property $a_1:a_2 \equiv a_{2*}:a_{1*}$ to equate some soup terms by collapsing them into the same congruence class. Moreover, soup connections are not self-dual; we define a \textit{negation} on soup connections as $(t:u)_* := t_*:u_* \equiv u:t$.
\end{definition}

\begin{definition}[Soup]
A \textit{soup} is a set of soup connections, where not all of the connections have to be of the same type. The resulting soup is of the form $S = \{v_1:v_2, \ldots, v_{m-1}:v_m\}$. All of the computation in our language is performed inside the relational soup, by treating its constituent soup connections as a form of explicit substitution. Our negation extends naturally into a \textit{soup negation} whereby $(S \cup S^\prime)_* := S_* \cup S^\prime_*$.
\end{definition}

\begin{definition}[Typing judgements in the dagger lambda calculus]
The \textit{typing judgements}, or \textit{sequents}, of our language are composed of terms, their respective types, and a relational soup. A typing judgement is thus represented by:
\[ t_1:A_1, \; t_2:A_2, \; \ldots, \; t_n:A_n \vdash_S t:B \]
\end{definition}

\begin{example}
In the following typing judgement, the types of $t_1$ and $t_2$ are both known to be $A$. Similarly, we know that both $D_C$ and $1$ are scalars, so their type is $I$. We omit writing the types for soup connections $t_1:t_2$ and $D_C:1$ but, to prevent ambiguity, we have to write it for $x :_B x$, because we have no other way of deducing it from the sequent:
\[ t_1:A \vdash_{\{ t_1 : t_2, x :_B x, D_C : 1 \}} t_2 : A \]
\end{example}

Now that we know which formulas are well formed in our language, we can proceed by defining a notion of binding. Contrary to what we are used to from the lambda calculus, where the notion of binding is restricted in scope to the confines of a single term, the dagger lambda calculus supports a binding that is global and whose scope spans the entire typing judgement. The computational interpretation of classical linear logic, which was provided by \cite{Abr93} in his linear chemical abstract machine, views two occurrences of the same variable as two ends of a communication channel. Adhering to the spirit of that definition, we define binding as follows:
\begin{definition}[Bound variables and terms in the dagger lambda calculus]
For any variable $x$, we say that it is a \textit{bound variable} when it appears twice within a given sequent, regardless of where in the sequent those instances appear. We can also say that an instance of that variable is \textit{captured} by the other instance of the variable in the sequent. As such, variable capture is not limited to the scope of a single term but spans the entire sequent. For any term $t$ that does not contain any occurrences of constants, we say that that term is captured when it consists entirely of variables that are captured within the scope of the current sequent. We use the phrases \textit{bound term} and \textit{bundle of bound variables} interchangeably when referring to captured terms. Trivially, a bound variable is also a bound term.
\end{definition}
\begin{example}
In the following sequent, $x_1$, $x_2$, $y_1$, $y_2$ and $f$ are all bound variables. The individual variables may be free when looking at subterms $x_{1*} \otimes y_1$ and $x_{2*} \otimes y_2$ but, when considering the scope of the entire sequent, they are captured by other occurrences of themselves in the soup. Moreover, the terms $x_{1*} \otimes y_1$ and $x_{2*} \otimes y_2$ are both bound terms because they contain no constants and they consist solely of variables that are captured by variables in the soup:
\[ x_{1*} \otimes y_1 : A^* \otimes B \vdash_{\{x_{1*} \otimes y_1:f, f:x_{2*} \otimes y_2\}} x_{2*} \otimes y_2 : A^* \otimes B \]

In the following sequent, $f$, $y$, $x_1$ and $x_2$ are bound variables; they can also be viewed as bound terms since a single variable is a term and since they do not contain any constants. The term $x_{1}* \otimes x_1$ is a bundle of bound variables because it contains no constants and consists solely of bound variables. The term $c_* \otimes x_2$, however, is not a bundle of bound variables because it contains a constant called $c$:
\[ f : A^* \otimes B \vdash_{\{ x_{1}* \otimes x_1:c_* \otimes x_2, f:x_{2*} \otimes y \}} y : B \]
\end{example}
\begin{remark}
As will become obvious from our language's sequent rules, which will impose linearity constraints on the introduction of variables, the nature of linearity in our language mandates that all of the variables within a given sequent occur exactly twice. This means that all of the free variables in a given term will occur once more in the sequent within which they reside, hence becoming captured in the scope of that sequent. Within that scope, all terms will essentially consist of captured variables and constants.
\end{remark}

\begin{definition}[$\alpha$-renaming on variables in the dagger lambda calculus]
A bound variable $x$ can be $\alpha$-renamed by replacing all of its instances, in a given sequent, with a bundle of bound variables $t$. The term $t$ has to be of the same type as $x$, must not contain any constants (since it will be a bundle of bound variables), and it must consist of variables that do not already appear in the sequent.
\end{definition}

\noindent We can now extend the operation of $\alpha$-renaming to operate on captured terms:
\begin{definition}[$\alpha$-renaming on terms in the dagger lambda calculus]
A bound term $t$ can be $\alpha$-renamed by either $\alpha$-renaming its constituent variables or, in cases where $t$ appears twice in a given sequent, by replacing all of its instances with a variable $x$. The variable $x$ has to be of the same type as $t$ and it must not already appear in the sequent.
\end{definition}

\begin{definition}[$\alpha$-equivalence in the dagger lambda calculus]
We define a notion of \textit{$\alpha$-equivalence} as the reflexive, symmetric and transitive closure of $\alpha$-renaming. In other words, we say that two sequents are \textit{$\alpha$-equivalent}, or \textit{equivalent up to $\alpha$-renaming}, when one can be transformed to the other by $\alpha$-renaming zero or more terms.
\end{definition}

\begin{example}
Going back to the examples we used earlier, the sequent
\[ x_{1*} \otimes y_1 : A^* \otimes B \vdash_{\{x_{1*} \otimes y_1:f, f:x_{2*} \otimes y_2\}} x_{2*} \otimes y_2 : A^* \otimes B \]
\noindent is $\alpha$-equivalent to
\[ g : A^* \otimes B \vdash_{\{g:f, f:x_{2*} \otimes y_2\}} x_{2*} \otimes y_2 : A^* \otimes B \]
\noindent because we can $\alpha$-rename the bound term $x_{1*} \otimes y_1$ into the variable $g$. Similarly, the sequent
\[ f : A^* \otimes B \vdash_{\{ x_{1}* \otimes x_1:c_* \otimes x_2, f:x_{2*} \otimes y \}} y : B \]
\noindent is $\alpha$-equivalent to
\[ x_{3*} \otimes y_2 : A^* \otimes B \vdash_{\{ x_{1}* \otimes x_1:c_* \otimes x_2, x_{3*} \otimes y_2:x_{2*} \otimes y_1 \}} y_1 : B \]
\noindent because we can $\alpha$-rename the bound variable $y$ into $y_1$ and also $\alpha$-rename the bound variable $f$ into the term $x_{3*} \otimes y_2$.
\end{example}

\begin{definition}[Typing contexts in the dagger lambda calculus]
The left-hand-side of a typing judgement is actually a list of typed terms. We use the letters $\Gamma$ and $\Delta$ as shorthand for arbitrary (possibly empty) lists of such terms. Let $\Delta$ be the list $t_1:T_1, t_2:T_2, \ldots, t_n:T_n$. We define $\bigotimes\Delta$ to be the term $\left(\left(\left(t_1 \otimes t_2\right) \otimes \ldots\right) \otimes t_n\right) : \left(\left(\left(T_1 \otimes T_2\right) \otimes \ldots\right) \otimes T_n\right)$, referring to it as \textit{$\Delta$ in tensor form}.
\end{definition}

Our language exposition features a Gentzen-style Sequent Calculus, which provides us with the inference rules used to produce judgements. Rules with a double line are bidirectional; sequents matching the top of the rule can be used to derive sequents matching the bottom and vice versa. The rules are formed in a way that allows composite terms to appear to the left of the turnstile. The sequent rules are:
\vskip 0.3cm
\begin{minipage}{.4\linewidth}
\begin{flushright}
    \AxiomC{}
    \RightLabel{\small Id,}
    \UnaryInfC{$x:A \vdash x:A$}
    \DisplayProof \vskip 0.4cm

    \AxiomC{$\Gamma \vdash_{S_1} a:A$}
    \AxiomC{$a^\prime : A, \Delta \vdash_{S_2} b:B$}
    \RightLabel{\small Cut$^*$,}
    \BinaryInfC{$\Gamma, \Delta \vdash_{S_1 \cup S_2 \cup \{a:a^\prime\}} b:B$}
    \DisplayProof \vskip 0.4cm

    \AxiomC{$\Gamma \vdash_{S_1} a:A$}
    \AxiomC{$\Delta \vdash_{S_2} b:B$}
    \RightLabel{\small $\otimes R^*$,}
    \BinaryInfC{$\Gamma, \bigotimes\Delta \vdash_{S_1 \cup S_2} a \otimes b : A \otimes B$}
    \DisplayProof
\end{flushright}
\end{minipage}
\begin{minipage}{.3\linewidth}
\begin{flushright}
    \AxiomC{$a:A \vdash_S b:B$}
    \RightLabel{\small Negation,}
    \UnaryInfC{$a_*:A^* \vdash_{S_*} b_*:B^*$}
    \DisplayProof \vskip 0.4cm

    \AxiomC{$a:A, \Gamma \vdash_S b:B$}
    \doubleLine\RightLabel{\small Curry,}
    \UnaryInfC{$\Gamma \vdash_S a_* \otimes b : A^* \otimes B$}
    \DisplayProof \vskip 0.4cm

    \AxiomC{$\Gamma, a:A, b:B \vdash_S c:C$}
    \doubleLine\RightLabel{\small $\otimes L$.}
    \UnaryInfC{$\Gamma, a \otimes b : A \otimes B \vdash_S c:C$}
    \DisplayProof
\end{flushright}
\end{minipage} \hskip 0.1cm
\begin{minipage}{.01\linewidth}
\rule{0.4pt}{4.6cm}
\end{minipage} \hskip 0.1cm
\begin{minipage}{.2\linewidth}
    \begin{footnotesize}
        $^*$: The sequents merged by the Cut and $\otimes R$ rules must not share any common variables. Whenever we use these two rules on sequents whose variables overlap, we have to $\alpha$-rename them first to prevent capturing the variables.
    \end{footnotesize}
\end{minipage}
\begin{remark}
The identity axiom (Id) is the only inference rule we have for introducing variables into our expressions. Consequently, variables are always introduced as bound pairs. The Cut rule establishes a connection between the output of one sequent and the input of another. The $\otimes R$ rule tensors two sequents together, preserving tensor associativity by turning $\Delta$ into $\bigotimes\Delta$. Given the capturing restriction for Cut and $\otimes R$, no other bindings can be introduced in our expressions. As such, variables will appear exactly twice in a sequent. We call this property \textit{linearity}, the sequents \textit{linear}, and the restrictions on Cut and $\otimes R$ \textit{linearity constraints}.
\end{remark}

We sometimes use sequents with an empty right-hand-side, for instance $a:A, \Gamma \vdash$ as shorthand for $a:A, \Gamma \vdash 1:I$. Such sequents are easy to produce by using \textit{Uncurrying}, the inverse of the \textit{Curry} rule, together with the constant $1:I$:
\begin{prooftree}
\AxiomC{$\Gamma \vdash a_*:A^*$}
\AxiomC{$\vdash 1:I$}
\RightLabel{\small $\otimes R$}
\BinaryInfC{$\Gamma \vdash a_* \otimes 1 : A^* \otimes I$}
\RightLabel{\small Uncurry}
\UnaryInfC{$a:A, \Gamma \vdash 1:I$}
\end{prooftree}

The language has a structural exchange rule that can be used to swap terms on the left hand side of a sequent. When navigating through a proof tree, instances of the exchange rule can be used to keep track of which terms were swapped and at which points during a derivation:
\begin{prooftree}
\AxiomC{$\Gamma, a:A, b:B, \Delta \vdash c:C$}
\doubleLine\RightLabel{\small Exchange.}
\UnaryInfC{$\Gamma, b:B, a:A, \Delta \vdash c:C$}
\end{prooftree}

Our language also has two unit rules, $\lambda_\Gamma$ and $\rho_\Gamma$, that are used to more accurately represent scalars:
\begin{center}
\AxiomC{$\Gamma \vdash_{S \cup \{i_*:1\}} b:B$}
\doubleLine\RightLabel{\small $\lambda_\Gamma$}
\UnaryInfC{$i:I, \Gamma \vdash_S b:B$}
\DisplayProof, \hskip 0.5cm
\AxiomC{$\Gamma \vdash_{S \cup \{i_*:1\}} b:B$}
\doubleLine\RightLabel{\small $\rho_\Gamma$}
\UnaryInfC{$\Gamma, i:I \vdash_S b:B$}
\DisplayProof.
\end{center}

Our language dynamics are defined through soup rules. These rules explain how the relational connections propagate within the soup, giving rise to an operational semantics for a form of "global substitution" that resembles pattern matching on terms. The soup propagation rules, called \textit{bifunctoriality}, \textit{trace} and \textit{cancellation} respectively, are:
\begin{align*}
    S \cup \{a \otimes b : c \otimes d\}    \longrightarrow & \; S \cup \{a:c, b:d\} \\
    S \cup \{x :_A x\}                      \longrightarrow & \; S \cup \{ D_A : 1 \} \\
    S \cup \{1 : 1\}                        \longrightarrow & \; S
\end{align*}

\noindent where $\psi$ is a constant and $x$ is a variable. Our soup rules also contain a \textit{consumption rule}. This rule uses up a relational connection between $\{t : u\}$ to perform a substitution in the typing judgement. Note, however, that the term we are substituting for has to be one that was captured in the scope of the sequent:
\[
\Gamma \vdash_{S \cup \{t : u\}} b:B  \longrightarrow \bigg(\Gamma \vdash_S b:B\bigg)
\begin{cases}
    [t / u],    & \text{if $u$ does not contain constants}\\
    [u / t],    & \text{if $t$ does not contain constants}
\end{cases}
\]

If $t$ and $u$ are both without constants, linearity implies that their constituent variables were all captured in the scope of the original sequent. In such a case, we can choose the way in which we want to substitute. This gives us a symmetric notion of substitution, where our choice of substitution does not affect the typing judgement, as the sequents will be equivalent up to alpha renaming.

\begin{example}
\noindent Consider the following sequent:
\[ f : A^* \otimes B \vdash_{\{ f:c_* \otimes y \}} y : B \]
\noindent The variable $f$ is captured within the scope of the sequent. As such, we can use the \textit{consumption rule} to consume the connection in the soup and substitute $c_* \otimes y$ for $f$ in the rest of the sequent. This will change the sequent to:
\[ c_* \otimes y : A^* \otimes B \vdash y : B \]

\noindent Alternatively, if we had $\alpha$-renamed the original sequent to:
\[ x_{1*} \otimes y_1 : A^* \otimes B \vdash_{\{ x_{1*} \otimes y_1:c_* \otimes y_2 \}} y_2 : B \]
\noindent we could have then used the \textit{bifunctoriality} rule to split the soup connection:
\[ x_{1*} \otimes y_1 : A^* \otimes B \vdash_{\{ x_{1*}:c_*, y_1:y_2 \}} y_2 : B \]
\noindent The first connection of the resulting soup is only consumable in one way, since $c$ is a constant, by substituting $c_*$ for $x_{1*}$. The second soup connection, however, presents us with a choice, since both $y_1$ and $y_2$ are captured in the sequent. One choice will give us
\[ c_* \otimes y_2 : A^* \otimes B \vdash y_2 : B \]
\noindent while the other choice will give us
\[ c_* \otimes y_1 : A^* \otimes B \vdash y_1 : B \]

\noindent Upon closer inspection, one will notice that all three of the resulting sequents are $\alpha$-equivalent.
\end{example}

\begin{definition}[Soup reduction]
We use the term \textit{soup reduction} to refer to the binary relation that extends $\alpha$-equivalence with the sequent transformations that are caused by applying one of the soup rules. Thus, for two sequents $\Gamma \vdash_{S_1} t : T$ and $\Gamma \vdash_{S_2} t : T$, if the soup $S_1$ is transformed into $S_2$ through the application of one of the soup propagation rules, $S_1 \rightarrow S_2$, then we say that one sequent reduces to the other via \textit{soup reduction}. Similarly, if a sequent $J_1$ is transformed into $J_2$ by using the consumption rule to perform a substitution, we say that $J_1$ reduces to $J_2$ via \textit{soup reduction}.
\end{definition}

\begin{definition}[Soup equivalence]
We define a notion of \textit{soup equivalence} as the reflexive, symmetric, and transitive closure of soup reduction. In other words, we say that two sequents $J_1$ and $J_2$ are \textit{soup-equivalent}, or \textit{equivalent up to soup-reduction}, when we can convert one to the other by using zero or more instances of $\alpha$-renaming and soup reduction.
\end{definition}

\noindent We can now use the rules that we have defined so far in order to express the computational notion of application:
\begin{definition}[Application in the dagger lambda calculus]
Let $t$ and $f$ be terms such that $t:A$ and $f:A^* \otimes B$ for some types $A$ and $B$. We define the \textit{application} $ft$ as a notational shorthand for representing a variable $x:B$, along with a connection in our soup. The origins of the application affect the structure of its corresponding soup connection:
\begin{center}
$ft:B, \Gamma \vdash c:C \;\;   := \;\; x:B, \Gamma \vdash_{\{f:t_* \otimes x\}_*} c:C$ \hskip 0.5cm and \hskip 0.5cm
$\Gamma \vdash ft:B \;\;        := \;\; \Gamma \vdash_{\{f:t_* \otimes x\}} x:B$
\end{center}
For an application originating inside our soup, we have:
\begin{center}
$\{ ft:c \} := \{ x:c \} \cup \{ f:t_* \otimes x \}$ \hskip 0.5cm and \hskip 0.5cm
$\{ c:ft \} := \{ c:x \} \cup \{ f:t_* \otimes x \}_*$
\end{center}
\end{definition}

\begin{corollary}[Beta reduction]
This immediately allows us to represent a form of \textit{beta reduction}. Instead of relying on an implicit meta-concept of substitution, our beta reduction is going to express the binding and reduction of terms by connecting them in the soup by setting $(a_* \otimes b)t \stackrel{\beta}{\longrightarrow} b$, while causing $\{t:a\}$ or $\{t:a\}_*$ to be added to the relational soup.
\end{corollary}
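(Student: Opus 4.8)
The plan is to verify the claimed reduction purely by unfolding the definition of Application and then chaining together the soup rules already established, namely bifunctoriality, the soup-connection equivalence $a_1:a_2 \equiv a_{2*}:a_{1*}$, and the consumption rule. Since the redex $(a_* \otimes b)t$ has $f := a_* \otimes b : A^* \otimes B$ and $t:A$ with $a:A$ and $b:B$, it is a legitimate application, so I may substitute it directly into the relevant clause of the Application definition. The statement packages two cases at once (the ``$\{t:a\}$ or $\{t:a\}_*$''), which correspond exactly to whether the redex is unfolded via an undualised connection (right of the turnstile, or inside the soup as $\{ft:c\}$) or a dualised one (left of the turnstile, or inside the soup as $\{c:ft\}$).

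First I would treat an occurrence on the right of the turnstile. Unfolding the definition gives $\Gamma \vdash (a_* \otimes b)t : B \;=\; \Gamma \vdash_{\{a_* \otimes b \,:\, t_* \otimes x\}} x:B$, where $x$ is a freshly introduced variable of type $B$. Applying bifunctoriality to the tensor connection $a_* \otimes b : t_* \otimes x$ splits it into $\{a_*:t_*,\; b:x\}$, which is type-correct because both sides of the original connection have type $A^* \otimes B$. The connection equivalence then rewrites $a_*:t_*$ as $t:a$, so the soup reads $\{t:a,\; b:x\}$. Finally, since $x$ is freshly introduced it occurs exactly twice (in the connection and as the output), hence it is captured; the consumption rule applied to $b:x$ is therefore licit and, because $x$ contains no constants, it substitutes $b$ for $x$, turning the output into $b:B$ and leaving exactly $\{t:a\}$ in the soup. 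This is precisely $\Gamma \vdash b:B$ with $\{t:a\}$ adjoined, establishing $(a_* \otimes b)t \stackrel{\beta}{\longrightarrow} b$.

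For an occurrence on the left of the turnstile, the definition instead introduces the dualised connection $\{a_* \otimes b : t_* \otimes x\}_*$. Here I would first push the star inside using $(s:u)_* := s_*:u_*$ together with the term De Morgan rule, obtaining the connection $b_* \otimes a : x_* \otimes t$. Bifunctoriality splits this into $\{b_*:x_*,\; a:t\}$; the connection equivalence rewrites $b_*:x_*$ as $x:b$ and identifies $a:t$ with $\{t:a\}_*$. Consuming the $x$-connection then replaces the fresh variable by $b$ in the left context, leaving $\{t:a\}_*$ in the soup, exactly as claimed. The in-soup forms $\{ft:c\}$ and $\{c:ft\}$ reduce identically, since they reuse the same undualised and dualised connections respectively.

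The computation itself is routine; the only real care is the bookkeeping of the linear negation, since the placements of the redex differ precisely by a soup negation and it is easy to drop or misapply a star when invoking $(s:u)_* := s_*:u_* \equiv u:s$ together with the involutivity of the star. I would therefore keep the dualities explicit at each rewriting step and re-check type-correctness immediately after the bifunctoriality split, so that the surviving connection is unambiguously identified as $\{t:a\}$ in the right-hand case and as $\{t:a\}_*$ in the left-hand and in-soup cases.
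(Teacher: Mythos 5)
Your proposal is correct and follows essentially the same route as the paper's own proof: unfold the Application definition to obtain the connection $\{a_* \otimes b : t_* \otimes x\}$ (or its dual), split it by bifunctoriality, convert $a_*:t_*$ to $t:a$ via the connection equivalence, and consume the $b$--$x$ connection to replace the fresh variable by $b$. The only difference is presentational --- you push the soup negation inside before applying bifunctoriality, whereas the paper negates the already-split set --- and your extra care with the dualities and the in-soup cases is a harmless elaboration of the same argument.
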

\begin{proof}
This is derived from our definition of application because $(a_* \otimes b)t$ represents a variable $x$ along with one of two possible connections in our soup. The soup connection can be manipulated into:
\[ \{a_* \otimes b : t_* \otimes x\}   \rightarrow \{a_*:t_*, b:x\}    \rightarrow \{t:a\} \cup \{b:x\} \]
\[ \{a_* \otimes b : t_* \otimes x\}_* \rightarrow \{a_*:t_*, b:x\}_*  \rightarrow \{t:a\}_* \cup \{x:b\} \]
The connection between $b$ and $x$ can then be consumed to change the variable $x$ into a $b$. All that remains is $\{t:a\}$ or $\{t:a\}_*$.
\end{proof}

Now that all of the language's rules are in place, we can demonstrate how the familiar notion of lambda abstraction can be reconstructed from the  finer notions of linear negation and tensor, by defining it to be a notational shorthand:

\begin{definition}[Lambda abstraction in the dagger lambda calculus]
Let $\lambda a.b := a_* \otimes b$ and $A \multimap B := A^* \otimes B$
\end{definition}

\noindent The following combinators are used in the rest of this paper:
\begin{align*}
    id_A    & := \lambda a.a \textrm{ (where $a:A$)} & \bar{b} & := \lambda g.\lambda f.\lambda a.g(fa)\\
    \bar{s} & := \lambda (a \otimes b).(b \otimes a) & \bar{t} & := \lambda f.\lambda g.\lambda(x_1 \otimes x_2).(fx_1 \otimes gx_2)
\end{align*}

\begin{theorem}[Admissibility of $\multimap E$]
We can also use the definition of application to demonstrate that an implication elimination rule ($\multimap E$) is admissible within our set of rules.
\end{theorem}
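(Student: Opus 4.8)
The plan is to first pin down the statement of the rule and then reduce everything to a single application of Cut. Written out, the rule I want to show admissible is
\[ \frac{\Gamma \vdash_{S_1} f : A \multimap B \qquad \Delta \vdash_{S_2} t : A}{\Gamma, \Delta \vdash ft : B}. \]
Unfolding the abbreviation $A \multimap B := A^* \otimes B$ together with the definition of application, the conclusion $\Gamma, \Delta \vdash ft : B$ is literally the sequent $\Gamma, \Delta \vdash_{S_1 \cup S_2 \cup \{f : t_* \otimes x\}} x : B$ for a fresh $x : B$. So the whole task is to exhibit a derivation of this sequent, built only from the primitive rules, whose two open leaves are exactly the given premises.

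The key observation is that the soup connection $\{f : t_* \otimes x\}$ is precisely the kind of link that a Cut produces. I would therefore aim for a single Cut whose left premise is the given $\Gamma \vdash_{S_1} f : A^* \otimes B$ (so that the cut term on the left is $f$) and whose right premise is an \emph{evaluation context} $E := (t_* \otimes x) : A^* \otimes B, \Delta \vdash_{S_2} x : B$ (so that the cut term on the right is $t_* \otimes x$). Reading off the Cut rule with $a := f$ and $a' := t_* \otimes x$ yields exactly $\Gamma, \Delta \vdash_{S_1 \cup S_2 \cup \{f : t_* \otimes x\}} x : B$, as required. Everything then reduces to deriving $E$ from the second premise $\Delta \vdash_{S_2} t : A$.

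To build $E$ I would peel off its tensored input with the (bidirectional) $\otimes L$ rule, so that it suffices to derive $t_* : A^*, x : B, \Delta \vdash_{S_2} x : B$; here the two occurrences of $x$ form an identity thread supplied by the Id axiom, while the input $t_* : A^*$ has to be paired against the output $t : A$ of the premise. Categorically this pairing is the counit / evaluation $ev : (A \multimap B) \otimes A \to B$, and I would realise it using the unit (cup) $\vdash a_* \otimes a : A^* \otimes A$, which is itself just the Curry of the identity axiom, together with a Cut that performs the contraction; the De Morgan behaviour $(t_* \otimes x)_* = x_* \otimes t$ of the star is what makes the Curry/Uncurry bookkeeping line up. Since Negation as stated acts on single-input sequents, I would first fold $\Delta$ into tensor form $\bigotimes\Delta$, apply the relevant rule, and later redistribute with $\otimes L$ and Exchange.

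The main obstacle is the change of polarity: the second premise presents $t$ as an \emph{output}, whereas $E$ needs $t_*$ sitting in an \emph{input}, and the compact-closed transpose that bridges these is exactly where the cup and the Cut-generated soup link must be threaded through without breaking linearity. A related subtlety is the variable-disjointness side condition on Cut and $\otimes R$: the matched pair of $x$'s straddling the tensor cannot both come from the two branches of a single $\otimes R$, so I would introduce a fresh copy, connect it in the soup, and then \emph{consume} that connection, appealing to the fact that application is only defined up to soup-equivalence and $\alpha$-renaming. Once $E$ is in hand the final Cut is immediate, and a last check confirms that the resulting soup is $S_1 \cup S_2 \cup \{f : t_* \otimes x\}$ up to soup-equivalence, matching the definitional unfolding of $ft$ and establishing admissibility.
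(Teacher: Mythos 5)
Your proposal is correct, but it takes a genuinely different route from the paper's. The paper uses two Cuts followed by a soup reduction: it builds the $\eta$-expanded identity $a_* \otimes b : A^* \otimes B \vdash a_* \otimes b : A^* \otimes B$, Cuts it against $\Delta \vdash_{S_2} f : A^* \otimes B$ to create the link $\{f : a_* \otimes b\}$, Uncurries to expose $a:A$ on the left, Cuts against $\Gamma \vdash_{S_1} t:A$ to create $\{t:a\}$, and finally consumes $\{t:a\}$ so that the two links fuse into $\{f : t_* \otimes b\}$. You instead aim for a single Cut of $\Gamma \vdash_{S_1} f : A^* \otimes B$ against an evaluation context $t_* \otimes x : A^* \otimes B, \Delta \vdash_{S_2} x : B$, which produces the link $\{f : t_* \otimes x\}$ on the nose, with no soup reduction needed at the end; this is a legitimate and arguably cleaner derivation. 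The one place your write-up wobbles is the construction of that context: the detour through the cup $\vdash a_* \otimes a : A^* \otimes A$ and an extra ``contraction'' Cut is unnecessary, and your worry about the two occurrences of $x$ straddling the branches of an $\otimes R$ is unfounded, since both come from a single Id axiom. The direct route is to apply $\otimes R$ to $\Delta \vdash_{S_2} t : A$ and $x : B \vdash x : B$, obtaining $\Delta, x:B \vdash_{S_2} t \otimes x : A \otimes B$, then Uncurry (reading $t \otimes x : A \otimes B$ as $(t_*)_* \otimes x : (A^*)^* \otimes B$) to get $t_* : A^*, \Delta, x:B \vdash_{S_2} x:B$, and finish with Exchange and $\otimes L$. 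What your approach buys is a derivation free of any appeal to the consumption rule; what the paper's buys is that each premise is Cut in separately, so the intermediate soup $\{t:a, f : a_* \otimes b\}$ transparently mirrors the definition of application before being fused.
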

\begin{proof}
    \begin{prooftree}
        \AxiomC{$\Gamma \vdash_{S_1} t:A$}
        \AxiomC{$\Delta \vdash_{S_2} f:A^* \otimes B$}
        \AxiomC{}
        \UnaryInfC{$a:A \vdash a : A$}
        \UnaryInfC{$a_*:A^* \vdash a_*:A^*$}
        \AxiomC{}
        \UnaryInfC{$b:B \vdash b : B$}
        \BinaryInfC{$a_*:A^*, b:B \vdash a_* \otimes b:A^* \otimes B$}
        \UnaryInfC{$a_* \otimes b:A^* \otimes B \vdash a_* \otimes b:A^* \otimes B$}
        \RightLabel{\footnotesize Cut}
        \BinaryInfC{$\Delta \vdash_{S_2 \cup \{f:a_* \otimes b\}} a_* \otimes b:A^* \otimes B$}
        \RightLabel{\footnotesize Uncurry}
        \UnaryInfC{$a:A, \Delta \vdash_{S_2 \cup \{f:a_* \otimes b\}} b:B$}
        \RightLabel{\footnotesize Cut}
        \BinaryInfC{$\Gamma, \Delta \vdash_{S_1 \cup S_2 \cup \{t:a, f:a_* \otimes b\}} b:B$}
        \UnaryInfC{$\Gamma, \Delta \vdash_{S_1 \cup S_2 \cup \{f:t_* \otimes b\}} b:B$}
        \UnaryInfC{$\Gamma, \Delta \vdash_{S_1 \cup S_2} ft:B$}
    \end{prooftree}
\end{proof}

We define some additional notational conventions, so that we can more easily describe the reversal in the causal order of computation:
\begin{definition}[Complex conjugation]
Let $f: A^* \otimes B$ be an arbitrary function. As a notational convention, we set $f^* := \bar{s} f : B \otimes A^*$.
\end{definition}

\begin{theorem}[Admissibility of $\dag$-flip]
We can use the language's rules and definitions in order to admit a new structural rule called the $\dag$-flip. This rule contains all the computational symmetry that we will later need in order to model the \textit{dagger functor}.
\end{theorem}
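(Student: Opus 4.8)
The plan is to first pin down the precise shape of the $\dag$-flip rule and then exhibit it as a derived rule built only from Curry, Negation and Uncurry. I expect the rule, in its base form, to read
\[ \frac{a:A \vdash_S b:B}{\,b:B \vdash_{S_*} a:A\,} \]
sending the sequent that represents a morphism $\phi : A \to B$ to the one representing its adjoint $\phi^\dagger : B \to A$. The general case, with an arbitrary context $\Gamma$ on the left, should then follow by first collapsing $\Gamma$ into a single input $\bigotimes\Gamma$ via repeated uses of $\otimes L$, flipping, and (if desired) re-splitting on the other side.

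First I would apply the Curry rule to move the input across the turnstile, turning $a:A \vdash_S b:B$ into its name $\vdash_S a_* \otimes b : A^* \otimes B$. Next I would apply the Negation rule (extended in the obvious way to a sequent with empty left-hand side, which is itself immediate) to obtain $\vdash_{S_*} (a_* \otimes b)_* : (A^* \otimes B)^*$. The crucial computation then uses involutivity of the star together with the \emph{trivialised} De Morgan duality of the Linear negation definition: on terms $(a_* \otimes b)_* \equiv b_* \otimes a$ and on types $(A^* \otimes B)^* \equiv B^* \otimes A$, so the sequent becomes $\vdash_{S_*} b_* \otimes a : B^* \otimes A$, which is precisely the name of a morphism $B \to A$. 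Finally I would invoke Uncurry, the reverse direction of the bidirectional Curry rule, to push $b_*$ back to the left and arrive at $b:B \vdash_{S_*} a:A$, completing the derivation.

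The conceptual reason this models the dagger, which I would highlight, is that the soup records the internal cups and caps of the compact structure, and soup negation $S_*$ reverses every connection via $t:u \equiv (u:t)_*$; turning cups into caps is exactly what the dagger does graphically, while the star supplies the pointwise conjugation. I would therefore also verify that the rule is a genuine involution: applying $\dag$-flip twice yields $a:A \vdash_{(S_*)_*} b:B$, and since $(S_*)_* = S$ and the star is involutive, this returns the original sequent up to soup equivalence, matching the involutivity of the dagger functor.

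The main obstacle I anticipate is the bookkeeping around the order-swapping De Morgan duality rather than any deep argument: one must check that negating the compound name $a_* \otimes b$ produces $b_* \otimes a$, with the factors transposed, and not $a \otimes b_*$, since it is exactly this swap that converts the name of $\phi$ into the name of its adjoint. The same care is needed in tracking how $S_*$ rewrites each soup connection and in $\alpha$-renaming to respect the linearity constraints in the multi-input case. Once the single-input derivation and the involution check are in place, the context version and the verification that $\dag$-flip interacts correctly with the remaining structural rules should be routine.
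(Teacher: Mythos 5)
Your derivation is correct and uses the same ingredients as the paper's proof (Negation, the bidirectional Curry rule, and the order-reversing De Morgan duality), but it permutes them into a genuinely different route. The paper negates \emph{first}, while the left-hand side still consists of the single term $a:A$, then Uncurries to an empty right-hand side, uses the explicit Exchange rule to transpose $a_*$ and $b$, and Curries back; the transposition is done by Exchange. You instead Curry first to form the name $\vdash_S a_* \otimes b : A^* \otimes B$, negate that, and let the swap $(a_*\otimes b)_* \equiv b_* \otimes a$, $(A^*\otimes B)^* \equiv B^* \otimes A$ do the transposition for you, so no Exchange is needed. The one step you wave through --- applying Negation to a sequent with an \emph{empty} left-hand side --- is not literally an instance of the rule as stated, which requires the shape $a:A \vdash_S b:B$. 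It is repairable exactly the way the paper handles empty right-hand sides: use $\lambda_\Gamma$ to reinstate $1:I$ on the left (adding the self-dual connection $\{1:1\}$ to the soup, which the cancellation rule absorbs), negate, and strip $1:I$ off again using $I \equiv I^*$ and $1 \equiv 1_*$; you should say this rather than call it immediate, since the paper's ordering is chosen precisely to avoid needing it. Your closing observations --- that $S_*$ reverses each soup connection via $(t:u)_* \equiv u:t$ and that the flip is involutive because $(S_*)_* = S$ --- are correct and go beyond what the paper proves here, anticipating $([f]^\dag)^\dag = [f]$ in the dagger-compactness theorem.
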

\begin{proof}
\begin{prooftree}
    \AxiomC{$a:A \vdash_{S} b:B$}
    \RightLabel{\footnotesize Negation}
    \UnaryInfC{$a_*:A^* \vdash_{S_*} b_*:B^*$}
    \RightLabel{\footnotesize Uncurry}
    \UnaryInfC{$b:B, a_*:A^* \vdash_{S_*}$}
    \RightLabel{\footnotesize Exchange}
    \UnaryInfC{$a_*:A^*, b:B \vdash_{S_*}$}
    \RightLabel{\footnotesize Curry}
    \UnaryInfC{$b:B \vdash_{S_*} a:A$}
\end{prooftree}
\end{proof}

\begin{restatable}[Interchangeability of $\dag$-flip and Negation]{lemma}{DaggerFlipNegation}
Alternatively, we could have defined the language by including $\dag$-flip in our initial set of sequent rules. That would have allowed us to admit the Negation rule as a derived rule.
\end{restatable}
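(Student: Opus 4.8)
The plan is to observe that the admissibility proof for the $\dag$-flip given above is built, after a single application of Negation, entirely out of \emph{bidirectional} rules (Uncurry, Exchange, Curry) that make no reference to Negation. Consequently, if we instead take $\dag$-flip as primitive and drop Negation from the rule set, we can recover Negation simply by running that derivation backwards: we feed the premise into $\dag$-flip and then climb back up the chain of reversible steps to land on the conclusion of Negation.

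Concretely, starting from the premise $a:A \vdash_S b:B$, I would first apply the primitive $\dag$-flip to obtain $b:B \vdash_{S_*} a:A$. Note that this already produces the negated soup $S_*$ that the Negation rule demands in its conclusion. From here I would apply Uncurry to move $a:A$ across the turnstile, yielding $a_*:A^*, b:B \vdash_{S_*}$ (with empty right-hand side, i.e. $\vdash_{S_*} 1:I$); then Exchange to reorder the context into $b:B, a_*:A^* \vdash_{S_*}$; and finally Curry to abstract $b:B$ on the right, producing $a_*:A^* \vdash_{S_*} b_*:B^*$, which is exactly the conclusion of Negation. Each of these three steps is just the inverse reading of a step in the original $\dag$-flip derivation, so the correctness of the chain is inherited directly from that proof.

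The step I expect to require the most care is the unit bookkeeping hidden inside the two (Un)curry moves. When Uncurry acts on the output $a:A$ it must be read as acting on $a \otimes 1 : A \otimes I$, with the abstracted variable emerging as $a_*:A^*$ via $a \equiv (a_*)_*$ and $A \equiv (A^*)^*$; symmetrically, the final Curry step produces $b_* \otimes 1 : B^* \otimes I$ rather than $b_*:B^*$ on the nose. Closing these gaps relies on the unit rule $\rho_\Gamma$, on the negation invariance $I \equiv I^*$, and on the involutivity conventions $1 = 1_*$, $(a_*)_* \equiv a$ and $(A^*)^* \equiv A$ drawn from the definitions of the scalar unit and of linear negation. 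The only remaining thing to verify is the side-condition implicit in the lemma, namely that none of Uncurry, Exchange, or Curry secretly depends on Negation; since all three are given as primitive bidirectional rules in the sequent calculus, this is immediate, and the derivation is therefore legitimate in the Negation-free system.
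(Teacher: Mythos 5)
Your derivation is exactly the paper's own proof: apply the primitive $\dag$-flip to obtain $b:B \vdash_{S_*} a:A$, then Uncurry, Exchange, and Curry to arrive at $a_*:A^* \vdash_{S_*} b_*:B^*$, with the same intermediate sequents in the same order. The additional remarks on unit bookkeeping via $\rho_\Gamma$, $I \equiv I^*$ and involutivity are sound and merely make explicit conventions the paper leaves implicit.
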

\begin{proof}
\begin{prooftree}
    \AxiomC{$a:A \vdash_{S} b:B$}
    \RightLabel{\footnotesize $\dag$-flip}
    \UnaryInfC{$b:B \vdash_{S_*} a:A$}
    \RightLabel{\footnotesize Uncurry}
    \UnaryInfC{$a_*:A^*, b:B \vdash_{S_*}$}
    \RightLabel{\footnotesize Exchange}
    \UnaryInfC{$b:B, a_*:A^* \vdash_{S_*}$}
    \RightLabel{\footnotesize Curry}
    \UnaryInfC{$a_*:A^* \vdash_{S_*} b_*:B^*$}
\end{prooftree}
\end{proof}

\subsection{Scalars}
\label{Subsection:Scalars}
Similarly to the attachable monoid that is described in \cite{Abr05} for multiplying scalars, we can optionally define a multiplication operation for the scalars in the dagger lambda calculus. This is not part of the structure that is necessary to model dagger compact categories computationally, hence the designation \textit{optional}, but it does provide a good example of how connections propagate in the soup:
\begin{definition}[Scalar multiplication]
For any two scalars $m:I$ and $n:I$, we define a multiplication operation $m \cdot n : I$ such that:
\[ m \cdot 1 = 1 \cdot m = m \]
\noindent and
\[ \{ m \cdot p : n \cdot q \} := \{ m:n, p:q \} \]
\end{definition}

\noindent The operation features a number of properties. To help the reader get more accustomed to the way things propagate in the soup, we will demonstrate some of them as an example. First of all, scalar multiplication is \textit{associative}:
\begin{lemma}[Associativity of multiplication]
$ (a \cdot b) \cdot c = a \cdot (b \cdot c) $
\end{lemma}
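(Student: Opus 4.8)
The plan is to push everything down to the level of soup connections, where scalar multiplication is actually defined, and to show that both bracketings of the triple product decompose to one and the same flat soup. First I would form the connection between $(a \cdot b) \cdot c$ and a generic product-of-products $(m \cdot n) \cdot p$ and apply the defining equation $\{m \cdot p : n \cdot q\} := \{m:n, p:q\}$ twice: once to split off the outer factor $c$ against $p$, and once to split the remaining connection $\{a \cdot b : m \cdot n\}$, obtaining $\{a:m, b:n, c:p\}$. Then I would carry out the symmetric calculation for the right-associated product, connecting $a \cdot (b \cdot c)$ to $m \cdot (n \cdot p)$; two applications of the same defining equation yield the identical soup $\{a:m, b:n, c:p\}$.

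To turn this coincidence of connections into an equality of the two scalar terms \emph{without} presupposing the associativity I am trying to prove, I would specialise the right-hand factors to the tensor unit. Using only the unit laws $m \cdot 1 = 1 \cdot m = m$ (in particular $1 \cdot 1 = 1$), both $(1 \cdot 1) \cdot 1$ and $1 \cdot (1 \cdot 1)$ collapse to $1$, and crucially this collapse uses no associativity. The two decompositions above then read $\{((a \cdot b) \cdot c) : 1\} = \{a:1, b:1, c:1\} = \{(a \cdot (b \cdot c)) : 1\}$, so the two scalars exhibit identical behaviour when connected to $1$.

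Finally I would promote this to a genuine equality via soup reduction: placing either scalar against $1$ inside a sequent and invoking the consumption rule transforms the judgements identically, and by the symmetry of substitution the outcomes agree up to $\alpha$-renaming; hence $(a \cdot b) \cdot c$ and $a \cdot (b \cdot c)$ are soup-equivalent, which is exactly the claimed equality.

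I expect the main obstacle to be this last bridge, namely justifying that agreement of the connect-to-unit soups certifies equality of the underlying scalar terms. The subtlety is that a naive direct connection $\{(a \cdot b) \cdot c : a \cdot (b \cdot c)\}$ does not simply cancel: decomposing it (after padding with units) leaves residual connections such as $\{b:1, 1:b\}$, and the congruence $a_1 : a_2 \equiv a_{2*} : a_{1*}$ reintroduces $b_*$, which need not vanish since $b_* = b$ is not available for general scalars. Routing the argument through the unit, together with the unit laws and the trace/cancellation rules $\{x :_I x\} \longrightarrow \{D_I : 1\} = \{1 : 1\} \longrightarrow \emptyset$ (using $D_I = 1$), is what keeps the reasoning clean and non-circular.
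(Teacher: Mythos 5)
Your proposal is correct and follows essentially the same route as the paper: the paper's proof is exactly the chain $\{(a \cdot b) \cdot c : 1\} = \{(a \cdot b) \cdot c : (1 \cdot 1) \cdot 1\} = \{a:1, b:1, c:1\} = \{a \cdot (b \cdot c) : 1 \cdot (1 \cdot 1)\} = \{a \cdot (b \cdot c) : 1\}$, i.e.\ routing both bracketings through the unit and flattening with the defining decomposition of scalar connections. Your closing discussion of why the connect-to-unit soups certify equality of the scalars is extra care the paper does not spell out, but it does not change the argument.
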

\begin{proof}
\begin{align*}
    \{ (a \cdot b) \cdot c : 1 \}   & = \{ (a \cdot b) \cdot c : (1 \cdot 1) \cdot 1 \} \\
                                    & = \{ a:1, b:1, c:1 \} \\
                                    & = \{ a \cdot (b \cdot c) : 1 \cdot (1 \cdot 1) \} \\
                                    & = \{ a \cdot (b \cdot c) : 1 \}
\end{align*}
\end{proof}

\noindent The multiplication operation is also \textit{commutative}:
\begin{lemma}[Commutativity of multiplication]
$m \cdot n = n \cdot m$
\end{lemma}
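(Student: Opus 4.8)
The plan is to follow exactly the template of the preceding associativity proof: reduce the claimed equality of two scalars to an equality of soups by connecting each side to the unit scalar $1$, and then exploit the fact that a soup is a \emph{set} of connections, so the order in which its connections are listed is immaterial.

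Concretely, I would start from the defining soup connection $\{m \cdot n : 1\}$ and rewrite the right-hand unit as $1 = 1 \cdot 1$, which is licensed by the identity $m \cdot 1 = m$ instantiated at $m = 1$. Applying the decomposition rule $\{m \cdot p : n \cdot q\} := \{m:n, p:q\}$ (reading the right-hand unit as $1 \cdot 1$) then splits this into $\{m:1, n:1\}$. Running the same computation on the other side gives $\{n \cdot m : 1\} = \{n \cdot m : 1 \cdot 1\} = \{n:1, m:1\}$. The whole derivation is thus:
\begin{align*}
    \{ m \cdot n : 1 \} & = \{ m \cdot n : 1 \cdot 1 \} \\
                        & = \{ m:1, n:1 \} \\
                        & = \{ n:1, m:1 \} \\
                        & = \{ n \cdot m : 1 \cdot 1 \} \\
                        & = \{ n \cdot m : 1 \}.
\end{align*}

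The only step carrying any content is the middle equality $\{m:1, n:1\} = \{n:1, m:1\}$, and I expect this to be the crux of the argument rather than a genuine obstacle: it holds simply because a soup is an unordered set, so listing the two connections in the opposite order denotes the very same soup. Note that no individual connection is reversed here---both $m:1$ and $n:1$ keep the scalar on the left and $1$ on the right---so the asymmetry of the soup relation never comes into play, and in particular we need not invoke $a_1:a_2 \equiv a_{2*}:a_{1*}$ or the negation invariance of $I$. Finally, since equal boundary soups of the form $\{s:1\}$ characterise equal scalars $s$ (the same criterion used to close the associativity lemma), the equality of the two soups yields $m \cdot n = n \cdot m$.
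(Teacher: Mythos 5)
Your derivation is line-for-line identical to the paper's own proof: expand $1$ as $1\cdot 1$, split via the multiplication rule into $\{m:1,\, n:1\}$, use the fact that a soup is an unordered set to swap the connections, and recombine. The extra justification you give for each step (and the observation that the asymmetry of soup connections never enters) is consistent with the paper and adds nothing that conflicts with it.
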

\begin{proof}
\begin{align*}
    \{ m \cdot n : 1 \} & = \{ m \cdot n : 1 \cdot 1 \} \\
                        & = \{ m:1, n:1 \} \\
                        & = \{ n:1, m:1 \} \\
                        & = \{ n \cdot m : 1 \cdot 1 \} \\
                        & = \{ n \cdot m : 1 \}
\end{align*}
\end{proof}

\noindent It is \textit{sesquilinear}:
\begin{lemma}[Sesquilinearity of scalar connections]
$ \{ m : n \} = \{ m \cdot n_* : 1 \} $
\end{lemma}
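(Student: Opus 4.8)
The plan is to mirror the soup-propagation style used in the preceding scalar lemmas, where a connection is massaged by routing it through the unit $1$ and the multiplication-connection rule $\{m \cdot p : n \cdot q\} = \{m:n, p:q\}$. I would work directly on the connection $\{m:n\}$ and aim to reach $\{m \cdot n_*:1\}$ through a single chain of equalities in the soup, without ever leaving the relational setting.

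First I would use the unit laws $m = m \cdot 1$ and $n = 1 \cdot n$ to rewrite the connection into the product form $\{m \cdot 1 : 1 \cdot n\}$, so that the multiplication-connection rule becomes applicable to both sides at once. Applying that rule then splits the single connection into the pair $\{m:1,\, 1:n\}$, reducing the problem to understanding the auxiliary connection $1:n$.

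The crucial step is to turn $1:n$ into $n_*:1$ via the soup-connection congruence $a_1:a_2 \equiv a_{2*}:a_{1*}$; concretely $1:n \equiv n_*:1_* = n_*:1$, since $1_* = 1$. This is precisely where the conjugation $n_*$ enters, so it is the conceptual heart of the lemma — the ``sesqui'' in sesquilinearity comes entirely from this congruence rather than from the unit bookkeeping. After this rewriting the soup reads $\{m:1,\, n_*:1\}$, which I recombine by reading the multiplication-connection rule right to left into $\{m \cdot n_* : 1 \cdot 1\}$, and finally collapse $1 \cdot 1 = 1$ to obtain $\{m \cdot n_* : 1\}$, as required.

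I expect the main obstacle to be one of care rather than depth: getting the direction of the conjugation correct when invoking $a_1:a_2 \equiv a_{2*}:a_{1*}$, so that the conjugate lands on $n$ and not on $m$, and inserting the units symmetrically enough that the multiplication-connection rule matches on the nose in both the forward (splitting) and reverse (recombining) directions. Beyond that, every individual step is a routine application of the unit and multiplication rules already established for scalars.
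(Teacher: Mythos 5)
Your proposal is correct and follows exactly the same chain of equalities as the paper's proof: insert units to get $\{m \cdot 1 : 1 \cdot n\}$, split via the multiplication-connection rule, flip $1:n$ to $n_*:1$ using the congruence $a_1:a_2 \equiv a_{2*}:a_{1*}$ together with $1_*=1$, then recombine and collapse $1\cdot 1$. Your identification of the congruence step as the source of the conjugation is precisely where the paper's argument turns as well.
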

\begin{proof}
\begin{align*}
    \{ m : n \} & = \{ m \cdot 1 : 1 \cdot n \} \\
                & = \{ m:1, 1:n \} \\
                & = \{ m:1, n_*:1 \} \\
                & = \{ m \cdot n_* : 1 \cdot 1 \} \\
                & = \{ m \cdot n_* : 1 \} \\
\end{align*}
\end{proof}

\noindent Finally, it is easy to deduce that the dimension of a tensor of types distributes into a product of dimensions:
\begin{corollary}[Dimension multiplication]
$ \{ D_A \cdot D_B : 1 \} = \{ D_{A \otimes B} : 1 \} $
\end{corollary}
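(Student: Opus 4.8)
The plan is to produce a single soup that reduces, by two different sequences of soup rules, both to $\{D_{A \otimes B} : 1\}$ and to $\{D_A \cdot D_B : 1\}$; \emph{soup equivalence} then forces the two right-hand sides to be equal. The common starting point will be the self-connection $\{x \otimes y :_{A \otimes B} x \otimes y\}$, where $x : A$ and $y : B$ are fresh variables, so that the same soup can be consumed in two incompatible but equally legitimate ways.

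First I would travel the ``combined'' route. The term $x \otimes y$ contains no constants and both of its variables are captured (each occurs twice in the connection), so it is a \emph{bundle of bound variables}; since it appears exactly twice, I may $\alpha$-rename it to a single fresh variable $z : A \otimes B$, obtaining $\{z :_{A \otimes B} z\}$. A single application of the \emph{trace} rule then yields $\{D_{A \otimes B} : 1\}$.

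Next I would travel the ``factored'' route from the very same soup. Applying \emph{bifunctoriality} to $\{x \otimes y : x \otimes y\}$ splits it into $\{x :_A x,\, y :_B y\}$. Two applications of the \emph{trace} rule, one per connection, give $\{D_A : 1,\, D_B : 1\}$. Reading the definition of \emph{scalar multiplication} backwards turns this into $\{D_A \cdot D_B : 1 \cdot 1\}$, and $1 \cdot 1 = 1$ collapses it to $\{D_A \cdot D_B : 1\}$. Chaining the two derivations through soup equivalence then delivers $\{D_A \cdot D_B : 1\} = \{D_{A \otimes B} : 1\}$.

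The only delicate point is justifying the $\alpha$-renaming in the combined route: I must verify that $x \otimes y$ genuinely qualifies as a captured term, namely that it contains no constants and that each of its constituent variables is bound within the soup, and that it occurs exactly twice, so that the term-level $\alpha$-renaming to the variable $z$ is admissible. Everything else is a direct application of the trace and bifunctoriality rules together with the definition of scalar multiplication, so I expect no further obstacle.
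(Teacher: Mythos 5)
Your proposal is correct and follows essentially the same route as the paper's own proof: the paper writes the single equational chain $\{D_A \cdot D_B : 1\} = \{D_A:1, D_B:1\} = \{a :_A a, b :_B b\} = \{a \otimes b :_{A \otimes B} a \otimes b\} = \{D_{A \otimes B} : 1\}$, which is exactly your two reduction routes meeting at the common soup $\{x \otimes y : x \otimes y\}$. Your explicit $\alpha$-renaming of the bound term $x \otimes y$ to a fresh variable before applying the trace rule is a detail the paper elides, and it is a legitimate (indeed slightly more careful) reading of the trace rule as stated for variables.
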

\begin{proof}
\begin{align*}
    \{ D_A \cdot D_B : 1 \} & = \{ D_A:1, D_B:1 \}                              & = \{ a :_A a, b :_B b \} \\
                            & = \{ a \otimes b :_{A \otimes B} a \otimes b \}   & = \{ D_{A \otimes B} : 1 \}
\end{align*}
\end{proof}

\subsection{Language properties}
\label{Subsection:Language properties}
Our lambda calculus was designed with a minimal set of rules. This has led to a tractable language, where most of the properties are easy to prove by structural induction. Throughout the rest of this section, we establish that our lambda calculus satisfies the following important properties of a calculus: subject reduction, confluence, strong normalisation, and consistency. Sketches of the proofs are provided and more detailed versions can be found in \cite{Atz13}.

\subsubsection{Subject reduction}
\label{Subsubsection:Subject reduction}
The first thing we have to prove, in order to demonstrate that our typing system is well defined, is the consistency of our typing dynamics. In other words, we have to verify that the way in which relational connections propagate through our soup preserves type assignments. This is easy to observe since our soup only connects \textit{equityped} terms. Pair consumption substitutes a term for another of the same type, thus preserving types.

\begin{theorem}[Subject reduction]
Let $J_1$ and $J_2$ be two typing judgements such that $J_1 = \Gamma \vdash_S t_1:A_1$ and $J_2 = \Delta \vdash_{S^\prime} t_2:A_2$. Suppose that these two judgements are such that we can use a soup reduction rule $S \longrightarrow S^\prime$ to reduce one to the other: $J_1 \longrightarrow J_2$. Then, the reduction will not alter type assignments in any way: $types(\Gamma) = types(\Delta)$ and $A_1 \equiv A_2$.
\end{theorem}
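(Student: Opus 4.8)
The plan is to proceed by a case analysis over the finitely many soup reduction rules, showing that each one individually preserves the collection of types appearing on the left of the turnstile as well as the type of the conclusion on the right. Since soup reduction is defined as the extension of $\alpha$-equivalence by the soup propagation rules (bifunctoriality, trace, cancellation) together with the consumption rule, it suffices to inspect these cases one at a time. The underlying principle, which I would state at the outset, is that every soup connection is by definition a connection between \emph{equityped} terms, so every rule that manipulates the soup either preserves or respects these type annotations.

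First I would dispatch the $\alpha$-renaming cases. By the definitions of $\alpha$-renaming on variables and on terms, any renamed variable or bound term must be of the same type as the one it replaces, so $\alpha$-equivalent sequents trivially have identical types throughout; hence $types(\Gamma) = types(\Delta)$ and $A_1 \equiv A_2$. Next I would treat the three soup propagation rules. For bifunctoriality, $\{a \otimes b : c \otimes d\} \longrightarrow \{a:c, b:d\}$ only rearranges a soup connection into two smaller ones; by the De Morgan definition of linear negation and the fact that $a \otimes b$ and $c \otimes d$ are equityped of some type $X \otimes Y$, the split connections $a:c$ and $b:d$ are again between equityped terms, and neither the left context nor the conclusion is touched. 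For the trace rule $\{x :_A x\} \longrightarrow \{D_A : 1\}$, the replaced and replacing connections are both scalar connections of type $I$, and again nothing outside the soup changes. The cancellation rule $\{1:1\} \longrightarrow \emptyset$ simply deletes a scalar connection and leaves the judgement's types untouched.

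The main case, and the one I expect to be the principal obstacle, is the consumption rule, since it is the only rule that genuinely rewrites the terms sitting in $\Gamma$ and on the right of the turnstile. Here a connection $\{t:u\}$ is consumed to perform a substitution $[t/u]$ or $[u/t]$. The key observation is that, because a soup connection joins equityped terms, $t$ and $u$ share a common type; substituting one captured term for the other therefore replaces a subterm by a term of identical type, so every type annotation in the resulting sequent is preserved. I would make this precise by noting that the substitution acts on the term layer only and leaves the type layer fixed: wherever $u$ (respectively $t$) occurs, it is of the same type as its replacement, so the inductively-defined type of any composite term built over it is unchanged. Care must be taken to verify that the linearity constraint (every variable occurs exactly twice) guarantees the substituted-for term was captured, so the substitution is well defined and does not introduce free occurrences of the wrong type.

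Having checked every generating case of soup reduction, I would conclude by remarking that the relation $J_1 \longrightarrow J_2$ is built from exactly these cases, so type preservation for each generator yields $types(\Gamma) = types(\Delta)$ and $A_1 \equiv A_2$ in general. The argument is essentially a routine structural induction once the equityped-connection invariant is isolated; the only delicate point is the consumption rule's interaction with capture and linearity, which is where I would spend the bulk of the formal detail.
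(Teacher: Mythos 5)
Your proposal is correct and follows essentially the same route as the paper's proof: the paper likewise observes that only the consumption rule can touch the premises or conclusion, and that consumption preserves types because soup connections join equityped terms. Your version merely spells out the remaining cases (the $\alpha$-renaming and the three propagation rules) explicitly, which the paper compresses into a single sentence.
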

\begin{proof}
A longer version of this proof can be found in \cite{Atz13}. The only soup rule that could affect the premises and conclusion of a typing judgement is the consumption rule. The resulting substitution may be global in scope, but it does not affect the sequent's typing, since it is substituting one term for another one of the same type.
\end{proof}

\subsubsection{Normalisation}
\label{Subsubsection:Normalisation}
Strong normalisation is a highly sought after property for lambda calculi, primarily because of the implications it has on the practical implementation of the language. A reduction that is strongly normalising implies that every sequent has a normal form. Furthermore, it requires that the normal form is attained after a finite number of steps, without any chance of running into an infinite reduction loop.

\begin{theorem}[Strong normalisation]
Every sequence of soup reduction steps is finite and ends with a typing judgement that is in normal form.
\end{theorem}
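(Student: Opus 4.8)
The plan is to exhibit a well-founded measure on typing judgements that strictly decreases with every application of a soup rule, so that no infinite reduction sequence can exist; termination, together with the observation that any sequent which is not in normal form admits an applicable rule, then yields that every maximal sequence is finite and ends in a normal form. Concretely, I would assign to each judgement $J$ the lexicographic pair $\mu(J) = (V(J), W(J))$, where $V(J)$ counts the total number of variable occurrences appearing anywhere in $J$ (in the context, in the succedent, and in every soup connection), and $W(J)$ is the number of $\otimes$ symbols occurring inside the soup plus the number of soup connections. Since $\mathbb{N} \times \mathbb{N}$ under the lexicographic order is well-founded, it suffices to check that each of bifunctoriality, trace, cancellation and consumption sends $\mu$ strictly downward.

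The three purely soup-propagating rules are routine. Bifunctoriality rewrites $\{a \otimes b : c \otimes d\}$ into $\{a:c, b:d\}$, preserving every variable occurrence while deleting two $\otimes$ symbols and adding one connection, so $V$ is unchanged and $W$ drops by one; cancellation deletes $\{1:1\}$, leaving $V$ untouched and removing one connection, so $W$ drops; and trace rewrites the self-loop $\{x :_A x\}$ into $\{D_A : 1\}$, replacing two variable occurrences by the constants $D_A$ and $1$ without changing $W$, so $V$ drops by two. In each case the pair decreases, the first two through the second coordinate and trace through the first.

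The crux is the consumption rule, the only rule that performs a global substitution and hence the only one that can enlarge terms, or even reintroduce fresh $\otimes$ symbols, elsewhere in the judgement. I would control it using linearity: by the remark following the sequent rules every variable occurs exactly twice, so consuming a connection $\{t:u\}$ whose constant-free side is a single captured variable $x$ deletes both the copy of $x$ sitting in the connection and, via the substitution, the unique remaining occurrence of $x$, which is overwritten by the other term. A direct count shows the net change in $V$ is exactly $-2$, independently of the size of the substituted term, since the copy of that term deleted with the connection is exactly balanced by the copy created at the substitution site. Thus consumption always strictly decreases the primary coordinate $V$, and the fact that it may \emph{increase} $W$ (by injecting a compound term into another connection and thereby re-enabling bifunctoriality) is harmless precisely because $V$ precedes $W$ in the lexicographic order. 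Consumption of a compound connection is reduced to this case by first applying bifunctoriality, while the symmetric subcase in which both sides are constant-free is handled by the same count, the choice of substitution direction mattering only up to $\alpha$-equivalence.

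The remaining subtlety is that soup reduction is defined to extend $\alpha$-equivalence, so I would treat $\alpha$-renaming as an equivalence rather than as a genuine rewrite step contributing to the length of a reduction sequence: none of the four rules requires the variable-expanding or term-contracting form of $\alpha$-renaming in order to fire, so one may reason on $\alpha$-equivalence classes (equivalently, on a fixed maximally contracted representative), on which $V$ and $W$ are invariant under the only renamings actually needed, namely variable-for-variable ones. I expect the consumption rule to be the main obstacle: pinning down the substitution semantics on compound captured terms and verifying the $-2$ count in full generality is where the real work lies, whereas the $\alpha$-equivalence bookkeeping is a secondary point that must simply be stated carefully so that the measure is well defined.
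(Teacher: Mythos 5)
Your overall strategy---a well-founded lexicographic measure that every soup rule strictly decreases---is reasonable, and your linearity-based count showing that consumption drops the number of variable occurrences by exactly twice the size of the consumed side, despite performing a global substitution, is a nice observation that the paper's sketch glosses over. The gap is in your treatment of $\alpha$-renaming: your primary coordinate $V$ is not $\alpha$-invariant, and the claim you use to dismiss this (``none of the four rules requires the variable-expanding\ldots{} form of $\alpha$-renaming in order to fire'') is false. Bifunctoriality does require it: a connection such as $\{f : c \otimes d\}$, where $f$ is a bound variable of tensor type $C \otimes D$, matches the pattern $\{a \otimes b : c \otimes d\}$ only after $f$ has been $\alpha$-renamed to a bundle $a \otimes b$ of fresh variables, a renaming explicitly licensed by the paper's definition and folded into soup reduction. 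Across the two-step sequence ($\alpha$-expand, then bifunctorialise) the judgement passes from a soup containing $\{f : c \otimes d\}$ (one $\otimes$, one connection) to one containing $\{a:c\},\{b:d\}$ (no $\otimes$, two connections), so $W$ is unchanged, while $V$ has risen by $2$: the variable $f$, occurring twice, has been replaced by $a$ and $b$, each occurring twice, and the result cannot be contracted back since $a \otimes b$ no longer occurs twice as a literal subterm. Hence $\mu$ strictly \emph{increases} along a legal reduction path, and since strong normalisation quantifies over all reduction sequences you cannot sidestep this path by preferring consumption.

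The repair is essentially what the paper's sketch implicitly does: make the dominant component of the measure depend only on the \emph{types} of the soup connections, which are $\alpha$-invariant. For instance, take as primary coordinate the multiset of type-sizes of the connections in the soup: bifunctoriality strictly decreases it however the terms have been renamed (one connection at $A \otimes B$ becomes two at $A$ and $B$), and cancellation and consumption each delete a connection outright; a secondary, $\alpha$-invariant count (such as the number of variable occurrences in the soup) then handles trace, which leaves the type multiset essentially unchanged but eliminates two variable occurrences. This is precisely the paper's observation that bifunctoriality is ``bounded in its application by the number of atomic types'' while the other three rules use up a connection. Your consumption analysis transfers to that setting unchanged and is the part of your write-up worth keeping.
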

\begin{proof}
A longer version of this proof can be found in \cite{Atz13}, using an induction on the size and structure of the soup reduction. A sequent not in normal form will have a soup with at least one usable connection, for which there are four possible reduction steps. A step using the \textit{trace}, \textit{cancellation} or \textit{consumption} rule will use up that soup connection, the soup being a finite set, leaving us with a smaller usable soup. A step using the \textit{bifunctoriality} rule, bounded in its application by the number of atomic types, will split the soup connection into simpler subtypes.
\end{proof}

\subsubsection{Confluence}
\label{Subsubsection:Confluence}
Another very important property for our language is the Church-Rosser property. It ensures that we can end up with the same sequent regardless of the reduction path we choose to follow. A careful observation of our rewrite rules will reveal that the rules are all left-linear.
\begin{theorem}[Left-linearity]
All of our soup rewrite rules are left-linear.
\end{theorem}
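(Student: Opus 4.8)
The plan is to establish left-linearity by a direct, finite case analysis of the soup rewrite rules, since left-linearity is a purely syntactic condition on each rule's left-hand side. First I would fix the working definition: a rule schema is \emph{left-linear} when no metavariable that may be instantiated by an arbitrary term or soup occurs more than once in its left-hand side, so that locating a redex never requires testing two independently chosen subterms for equality. The soup rules to check are exactly four: \emph{bifunctoriality}, \emph{trace}, \emph{cancellation}, and \emph{consumption}, so the argument reduces to inspecting four left-hand sides.

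I would then dispatch the three routine cases. For bifunctoriality, the pattern $S \cup \{a \otimes b : c \otimes d\}$ contains the soup-metavariable $S$ and the term-metavariables $a,b,c,d$, each occurring once, so it is left-linear. For cancellation, the pattern $S \cup \{1 : 1\}$ contains only $S$ together with the constant $1$; since $1$ is a fixed constant of the language and not a substitutable metavariable, nothing is repeated. For consumption, the pattern $\Gamma \vdash_{S \cup \{t : u\}} b:B$ lists $\Gamma, S, t, u, b, B$ once each. In each of these the matching only ever binds distinct metavariables, so left-linearity is immediate.

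The main obstacle is the \emph{trace} rule $S \cup \{x :_A x\} \longrightarrow S \cup \{D_A : 1\}$, whose left-hand side displays the same symbol $x$ on both endpoints of the connection and so looks non-left-linear at first sight. The key step is to observe that $x$ here is constrained to be a single \emph{variable}, not a metavariable ranging over arbitrary terms, and that linearity forces every variable to occur exactly twice in a sequent; hence a trace redex is not a coincidental equality between two independently instantiated subterms but the unique self-loop formed by the two occurrences of one linear variable, with $A$ determined as the type of $x$ and $S$ the only genuinely substitutable metavariable. I would make this precise by arguing that matching therefore amounts to recognising an atomic self-connection, and that since variables are irreducible the two occurrences can never be driven apart by reduction -- which is exactly the behaviour that left-linearity is meant to exclude. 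Having handled all four rules, I would conclude that the rewrite system is left-linear, and note that this is precisely the ingredient to be combined with an analysis of rule overlaps in the subsequent confluence argument.
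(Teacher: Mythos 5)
Your proposal is correct and follows essentially the same route as the paper, which disposes of the theorem in a single sentence by appealing to the linearity constraints: no variable occurs on a left-hand side more often than the two occurrences permitted to a linear variable. Your case analysis --- in particular the explicit treatment of the trace rule $S \cup \{x :_A x\}$ as the one left-hand side whose repeated $x$ must be read as the two ends of a single linear variable rather than as two independently instantiated subterms to be tested for equality --- simply makes explicit the work that the paper's one-line proof leaves implicit.
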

\begin{proof}
In accordance with the linearity constraints of our language, no variable appears more than twice on the left hand side of any of our soup reduction rules.
\end{proof}

One should note, at this point, that our soup rules do exhibit a form of "harmless" overlap. More specifically, the consumption rule ($S \cup \{t:u\} \longrightarrow S$) forms a critical pair with itself in cases where $t$ and $u$ are both bound. Fortunately, as we will see in the next lemma, these pairs prove to be \textit{trivial} as they correspond to sequents that are equivalent up to $\alpha$-renaming.

\begin{theorem}[Symmetry of substitution]
Let $J$ be a typing judgement of the form $J := \Gamma \vdash_{S \cup \{t:u\}} a:A$, where $t$ and $u$ are both bound. The connection $\{t:u\}$ can be consumed in either of two ways; one substitutes $t$ for $u$ and the other substitutes $u$ for $t$ in the typing judgement. Let's call these $J_1$ and $J_2$ respectively. $J_1$ will then be $\alpha$-equivalent to $J_2$.
\end{theorem}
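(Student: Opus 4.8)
The plan is to exhibit an explicit $\alpha$-renaming carrying $J_1$ to $J_2$, obtained by collapsing both terms to one common fresh variable. First I would pin down the occurrence structure forced by linearity. Since the soup connects equityped terms, $t$ and $u$ share a common type, say $T$. Because $J$ admits both consumption directions, the substituted-for term must occur in the body in each case; combined with the linearity remark (every variable occurs exactly twice in a sequent), this forces $t$ and $u$ to be variable-disjoint and each to occur in $J$ exactly twice as a syntactic block: once inside the connection $\{t:u\}$ and once as its capturing partner in $\Gamma \vdash_S a:A$. Writing $\mathcal{C} = \Gamma \vdash_S a:A$ for the body (the sequent with the connection removed), I would record $\mathcal{C}$ as a two-holed frame $K$ carrying the block $t$ in one hole (position $P_t$) and the block $u$ in the other (position $P_u$), with everything else in $K$ fixed and mentioning none of the variables of $t$ or $u$.

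With this normal form the two consumption results are immediate. Substituting $t$ for $u$ fills $P_u$ with $t$, so $J_1 = K[P_t \mapsto t,\, P_u \mapsto t]$, whereas substituting $u$ for $t$ fills $P_t$ with $u$, so $J_2 = K[P_t \mapsto u,\, P_u \mapsto u]$. In $J_1$ the bound term $t$ now occurs twice as a block, and in $J_2$ the bound term $u$ occurs twice as a block, so by the definition of $\alpha$-renaming on terms (a bound term occurring twice may be replaced throughout by a fresh variable of the same type) I would pick a single fresh variable $w:T$ not occurring in $J$ and rename $t \mapsto w$ in $J_1$ and $u \mapsto w$ in $J_2$. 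Both rewrites leave $K$ untouched and yield the \emph{identical} sequent $K[P_t \mapsto w,\, P_u \mapsto w]$, whence $J_1 \equiv_\alpha K[P_t \mapsto w,\, P_u \mapsto w] \equiv_\alpha J_2$, which is the claim.

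I expect the main obstacle to be the bookkeeping of the first paragraph rather than the renaming itself: justifying that, under the linearity constraints, the capturing partner of each of $t$ and $u$ really is a single block congruent to $t$ (respectively $u$), and that the two occurrences are genuinely disjoint so that the frame $K$ is common to $J_1$ and $J_2$. The one place this needs care is the interaction with linear negation: a capturing occurrence may surface dressed by a star, since $a_1:a_2 \equiv a_{2*}:a_{1*}$ and $(a \otimes b)_* := b_* \otimes a_*$, so I would first normalise such occurrences using these congruences before identifying the holes, and check that $\alpha$-renaming a bound term is compatible with this dressing. Once the frame is fixed, the remainder is the routine substitution-into-a-two-holed-context computation sketched above; the edge case where $t$ or $u$ is a single variable (as in the worked example, where the two outcomes differ only by swapping $y_1$ and $y_2$) is simply the base case of the same argument.
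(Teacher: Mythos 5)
Your proposal is correct and follows essentially the same route as the paper's proof: linearity forces $t$ and $u$ to occur exactly once each in the body $\Gamma \vdash_S a:A$, each consumption leaves the surviving term as a bound term occupying the same two positions, and an $\alpha$-renaming (in your case routed through a common fresh variable $w$) identifies $J_1$ with $J_2$. Your version merely makes explicit the bookkeeping (the two-holed frame, disjointness of the variables of $t$ and $u$, and the interaction with the star congruences) that the paper leaves implicit.
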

\begin{proof}
Since $t$ and $u$ are both bound, by linearity, we know that they appear exactly once in $\Gamma \vdash_S a:A$. After substitution is performed, $J_1$ will have two occurrences of $t$ where $t$ and $u$ used to be, so $t$ will be a bound term in that judgement. Similarly, $J_2$ will have two occurrences of $u$ where $t$ and $u$ used to be, so $u$ will be a bound term in that judgement. These bound terms occur in the exact same spots, so we can $alpha$-rename $J_1$ to $J_2$ and vice versa.
\end{proof}

\begin{corollary}[No overlap]
The rewrite rules have no overlap up to $\alpha$-equivalence of typing judgements.
\end{corollary}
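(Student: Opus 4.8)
The plan is to show that every overlap between two soup redexes is trivial up to $\alpha$-equivalence, which is exactly what the corollary asserts and which supplies the orthogonality input needed for the subsequent confluence argument. Since \textbf{Left-linearity} is already in hand, the only thing left is to locate all ways in which two redexes can interfere and to verify that each closes up to $\alpha$-renaming. I would organise the argument by the relative position of the two contracted redexes: either they live on two distinct soup connections, or they live on one and the same connection.

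For redexes on distinct connections, the only rule whose effect is not confined to its own connection is \textbf{consumption}, whose substitution acts globally on the judgement. The key observation is that this substitution is a structure-preserving renaming: it replaces a bound, constant-free term by an equityped one, and by \textbf{Subject reduction} it leaves every type in place, so it carries a bifunctoriality redex $a \otimes b : c \otimes d$ to a bifunctoriality redex, a trace redex $x :_A x$ to a trace redex, a cancellation redex $1:1$ to a cancellation redex, and a consumption redex to a consumption redex. Hence contracting one redex never destroys the other; the two steps commute, and the two resulting judgements differ at most by the renaming that the substitution introduces, i.e. they are $\alpha$-equivalent. So no essential overlap survives between redexes at disjoint connections.

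For two redexes sharing a single connection $\{t:u\}$ I would check the finitely many rule pairs directly. Trace and cancellation can never coincide, since trace demands a repeated \emph{variable} $x :_A x$ while cancellation demands the \emph{constant} connection $1:1$. Trace and consumption likewise do not truly overlap on $x :_A x$: there the two occurrences of $x$ sit inside the connection itself, so $x$ is not captured in the remainder $\Gamma \vdash_S b:B$ and consumption is inapplicable, leaving only trace. A structural rule may overlap consumption on a shared connection only when one side is a bound, constant-free term, and in each such case consumption followed by the residual rewrites and the structural rule followed by consumption reach $\alpha$-equivalent judgements, exactly as in the worked example that splits $\{x_{1*} \otimes y_1 : c_* \otimes y_2\}$ by bifunctoriality and then consumes. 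The one genuinely self-overlapping case is consumption against itself on a connection whose \emph{both} endpoints are bound: the connection can be consumed in either direction, and \textbf{Symmetry of substitution} shows the two outcomes are $\alpha$-equivalent. Collecting these cases, every overlap is trivial modulo $\alpha$, which is the statement of the corollary.

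The step I expect to be the main obstacle is the distinct-connection case, precisely because consumption is not an ordinary local term-rewriting rule: its substitution is global, so the usual critical-pair machinery, which assumes rewriting inside a fixed context, does not apply verbatim. The real work is to make rigorous the claim that a global substitution of one bound term for another commutes with each of the other soup rules and with a second substitution, and that whatever discrepancy it introduces is always absorbed by $\alpha$-renaming. Linearity, namely that every variable occurs exactly twice, is what prevents these substitutions from interfering with one another and is the lever that makes the commutation go through.
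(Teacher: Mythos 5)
Your core argument coincides with the paper's: the one genuine critical pair is consumption against itself on a connection whose two sides are both bound, and it is closed by the Symmetry of substitution theorem. The paper in fact offers no proof of this corollary at all --- it treats it as an immediate consequence of that theorem together with the unargued assertion, made in the paragraph preceding it, that consumption-vs-consumption is the \emph{only} overlap. Your proposal is therefore more ambitious than the source: you attempt to actually verify that assertion by enumerating the mixed cases (trace vs.\ cancellation, trace vs.\ consumption, bifunctoriality vs.\ consumption, and redexes on distinct connections), and you correctly identify that the delicate point is the globality of the consumption substitution, which takes the argument outside the standard critical-pair framework for term rewriting. That extra material is genuinely useful --- it is what a complete proof would need --- but it is also where your argument has a soft spot. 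The claim that the consumption substitution ``carries a consumption redex to a consumption redex'' is not quite right: if a connection $\{c:u\}$ with $c$ a constant is consumed, substituting $c$ for the bound variable $u$, and $u$ occurs inside the constant-free side $u'$ of another connection $\{t':u'\}$, then $u'[c/u]$ now contains a constant and the direction of consumption that substitutes $t'$ for $u'$ is destroyed. Joinability still holds (the surviving direction, or Symmetry of substitution when both directions were available beforehand, closes the diagram), but your commutation argument as stated glosses over this residual analysis. Since you yourself flag the distinct-connection case as the main obstacle and do not fully discharge it, the proposal should be read as a correct and more honest blueprint for the proof the paper omits, rather than a finished argument.
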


\begin{theorem}[Confluence]
Our reduction rules have the Church-Rosser property.
\end{theorem}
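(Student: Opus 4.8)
The plan is to deduce global confluence from local confluence by exploiting the fact that our reduction relation is terminating. Since the \emph{Strong normalisation} theorem guarantees that every sequence of soup reduction steps is finite, we may invoke Newman's Lemma: for a strongly normalising relation, confluence is equivalent to \emph{local confluence}, the property that whenever a single sequent $J$ reduces in one step to each of $J_1$ and $J_2$, there is a common sequent $J_3$ to which both $J_1$ and $J_2$ reduce. The entire burden thus reduces to closing every local peak $J_1 \leftarrow J \rightarrow J_2$.

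First I would classify the local peaks according to how the two applied rules interact. When the two steps act on \emph{disjoint} parts of the soup --- two distinct connections, neither of which is disturbed by the other step --- the steps trivially commute, and the peak is closed by performing them in the opposite order. The only remaining peaks are those in which the two steps compete for the same connection, or for connections sharing a variable; these are exactly the critical pairs of the system. Here I would appeal to the \emph{No overlap} corollary, which states that the rewrite rules have no overlap up to $\alpha$-equivalence, together with the \emph{Symmetry of substitution} theorem, which shows that the one genuine self-overlap of the consumption rule yields two sequents that are already $\alpha$-equivalent. With no further critical pairs to consider, every competing peak is joinable.

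The subtle point --- and the step I expect to be the main obstacle --- is that our reduction is \emph{not} an ordinary term rewriting system: the consumption rule performs a substitution whose scope is the whole typing judgement rather than a fixed subterm position. Consequently the off-the-shelf Critical Pair Lemma cannot be cited verbatim, and I would instead argue directly that the global character of consumption introduces no local peaks beyond those already enumerated. The case demanding the most care is the interaction between consumption and \emph{bifunctoriality}: a connection of the form $\{a \otimes b : c \otimes d\}$, with one side free of constants, may either be consumed wholesale or first split into $\{a:c,\, b:d\}$ and then consumed componentwise. I would show that these two routes reconverge up to $\alpha$-renaming, exactly as illustrated by the worked example following the consumption rule, so that this peak too is joinable.

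Having closed every local peak, local confluence follows, and Newman's Lemma then upgrades it to full confluence, establishing the Church--Rosser property. A more detailed verification, enumerating each pair of rules explicitly, can be carried out along these lines.
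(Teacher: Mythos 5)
Your proposal is correct and takes essentially the same route as the paper: both arguments combine strong normalisation with Newman's lemma and reduce the problem to local confluence, which in turn rests on left-linearity and the triviality (up to $\alpha$-equivalence) of the only critical pair, as established by the \emph{Symmetry of substitution} theorem and the \emph{No overlap} corollary. The sole difference is that the paper obtains weak confluence by citing Klop's theorem on weakly orthogonal rewrite systems, whereas you close the local peaks directly --- including the consumption/bifunctoriality interaction --- which is arguably the more careful treatment given that, as you observe, the globally-scoped consumption rule means the system is not literally a term rewriting system to which that theorem applies verbatim.
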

\begin{proof}
Our set of rewrite rules is \textit{left-linear} and has no significant overlap, since it only gives rise to critical pairs that are \textit{trivial} up to $\alpha$-equivalence. Therefore, our rewrite rules constitute a \textit{weakly orthogonal} rewrite system, which is \textit{weakly confluent} according to \cite{Klo92} (Consider the variation of Theorem 2.1.5 for \textit{weakly orthogonal} TRS's on page 72). Since the rewrite system is both strongly normalising and weakly confluent, we can use Newman's lemma to conclude that it also possesses the Church-Rosser property. See \cite{Klo92} for a more detailed explanation of the properties of orthogonal rewriting systems.
\end{proof}

\subsubsection{Consistency}
\label{Subsubsection:Consistency}
In order to show that our type theory is consistent, we have to show that our soup dynamics do not collapse all equityped terms to the same element.

\begin{theorem}[Consistency]
There exist two terms of the same type, henceforth referred to as $t_1$ and $t_2$, such that $\Gamma \vdash_{S_1} t_1:A$ and $\Gamma \vdash_{S_2} t_2:A$ could never reduce to the same typing judgement.
\end{theorem}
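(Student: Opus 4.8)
The plan is to leverage the two properties we have already secured, \emph{strong normalisation} and \emph{confluence}, in order to reduce the consistency claim to the purely syntactic observation that the calculus admits at least two distinct normal forms sharing a type. First I would record the standard consequence of combining those two theorems: since every sequence of soup reduction steps terminates and the reduction relation has the Church--Rosser property, every typing judgement possesses a \emph{unique} normal form, up to $\alpha$- and soup-equivalence. In particular, if two judgements $J_1$ and $J_2$ were to reduce to a common judgement $J_3$, then $J_1 \longrightarrow^* J_3 \longrightarrow^* nf(J_3)$ and uniqueness would force $nf(J_1) = nf(J_3) = nf(J_2)$, so the two judgements would have to share a normal form. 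Consequently, to prove the theorem it suffices to exhibit two derivable judgements, over the same context and with the same type, whose normal forms are distinct.

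For the witnesses I would take the scalars $1:I$ and $D_C:I$, where $C$ is an atomic type chosen distinct from the tensor unit $I$, so that $D_C$ is not forced to coincide with $1$ via the convention $D_I = 1$. Concretely, with $\Gamma$ empty and the shared type taken to be $I$, I would consider
\[ \vdash 1:I \qquad \text{and} \qquad \vdash D_C:I. \]
Both are legitimate sequents of the language: the constant $1:I$ is used throughout the exposition, and $D_C:I$ is the scalar dimension constant associated with $C$. Both carry an empty soup, so neither contains a usable connection, and each is therefore already in normal form; the uniqueness argument above then applies to them directly.

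It remains to argue that these two normal forms are genuinely distinct, and this is the one step that carries the real content of consistency. The terms $1$ and $D_C$ are distinct \emph{constants}, and $\alpha$-renaming only ever rewrites bound variables, never constants, so the two sequents are not $\alpha$-equivalent. Moreover, the only soup rule that ever manufactures a dimension is the \emph{trace} rule $\{x :_A x\} \longrightarrow \{D_A:1\}$, which produces $D_C$ solely in the presence of a self-connection on type $C$, and no rule of the core calculus identifies the distinct constants $1$ and $D_C$. Hence the normal forms differ, and by the uniqueness argument the two judgements can never reduce to a common typing judgement. The main obstacle is precisely this last point: one must verify that nothing in the soup dynamics silently collapses $D_C$ onto $1$, i.e.\ that the syntactic distinctness of the chosen normal forms is preserved by the equational theory generated by soup reduction, which is exactly the non-degeneracy that consistency is meant to assert.
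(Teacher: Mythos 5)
Your strategy is the same as the paper's: exhibit two derivable judgements of the same type that are already in normal form and syntactically distinct, so that (by strong normalisation and confluence, which the paper leaves implicit at this step) they can never share a reduct. The difference lies in the witnesses. The paper takes the closed combinators $id_{A \otimes A}$ and $\bar{s}_{A \otimes A}$, both of type $(A \otimes A) \multimap (A \otimes A)$, and simply observes that the corresponding sequents are distinct normal forms; you take the scalars $1$ and $D_C$. Your explicit appeal to normalisation plus the Church--Rosser property to obtain uniqueness of normal forms is cleaner than the paper's one-line assertion, and your closing remark that the real content is the non-identification of distinct constants by the soup dynamics is exactly the right emphasis. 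The one point to tighten is the derivability of your second witness: the paper introduces $D_C$ only as a constant and provides no inference rule yielding the bare sequent $\vdash D_C : I$; the judgement the calculus actually manufactures (via the trace rule) is $\vdash_{\{D_C : 1\}} 1 : I$, in which the connection $\{D_C : 1\}$ is stuck precisely because both sides are constants, so neither cancellation nor consumption applies. Replacing your pair by $\vdash 1:I$ and $\vdash_{\{D_C : 1\}} 1:I$ (or simply by the paper's combinators, whose derivations from Id, $\otimes R$, $\otimes L$ and Curry are immediate) closes this gap without changing your argument.
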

\begin{proof}
Consider two combinators of the same type, $t_1 = id_{A \otimes A}$ and $t_2 = \bar{s}_{A \otimes A}$. Both terms are closed, containing no free variables or constants. The sequents $\vdash id_{A \otimes A} : (A \otimes A) \multimap (A \otimes A)$ and $\vdash \bar{s}_{A \otimes A} : (A \otimes A) \multimap (A \otimes A)$ are distinct normal forms: They are clearly distinct from one another and cannot be further reduced using any of our rules, thereby proving that they could never reduce to the same typing judgement.
\end{proof}

\subsection{Correspondence to dagger compact categories}
\label{Subsection:Correspondence to dagger compact categories}

The purpose of this section is to provide a full Curry-Howard-Lambek correspondence between the dagger lambda calculus and dagger compact categories. We start by defining a directed graph $\mathcal{G}$, representing a signature for dagger compact categories. We then show how that graph can be interpreted to define the free dagger compact category $\mathcal{C}_{Free}$ and the dagger lambda calculus $\dag\lambda$. An appropriate Cut-elimination procedure is defined to partition the sequents of the dagger lambda calculus into equivalence classes up to soup equivalence. The resulting equivalence classes are modular proof invariants represented by denotations. We show that the types and denotations can be used to form a syntactic category, $\mathcal{C}_{Synt}$, and prove that the category is dagger compact. The diagram below, fashioned to resemble the diagram at the bottom of page 49 in \cite{Mac98}, is provided to help visualise the Curry-Howard-Lambek correspondence. In this diagram, $U\mathcal{C}_{Free}$ and $U\mathcal{C}_{Synt}$ are the underlying graphs of their respective categories, where identities, composition, natural isomorphisms and other structural elements of the parent categories have been "forgotten" by applying the forgetful functor $U$. $F$ is the unique functor between the free and the syntactic category, that satisfies the rest of the conditions in the diagram.
\[
\scalebox{1} 
{
\begin{pspicture}(0,-1.7217188)(12.894688,1.7217188)
\usefont{T1}{ptm}{m}{n}
\rput(4.5242186,1.5007813){\large $\mathcal{C}_{Free}$}
\usefont{T1}{ptm}{m}{n}
\rput(4.604219,-1.3192188){\large $\mathcal{C}_{Synt}$}
\usefont{T1}{ptm}{m}{n}
\rput(1.2442187,-1.3192188){\large $\dag\lambda$}
\psline[linewidth=0.04cm,arrowsize=0.05291667cm 2.0,arrowlength=1.4,arrowinset=0.4]{->}(1.576875,-1.2542187)(3.976875,-1.2542187)
\usefont{T1}{ptm}{m}{n}
\rput(2.6482813,-1.5442188){$\ell$}
\psline[linewidth=0.04cm,arrowsize=0.05291667cm 2.0,arrowlength=1.4,arrowinset=0.4]{->}(4.476875,1.2457813)(4.476875,-1.0542188)
\usefont{T1}{ptm}{m}{n}
\rput(4.798281,0.05578125){$!F$}
\usefont{T1}{ptm}{m}{n}
\rput(8.174219,1.5007813){\large $U\mathcal{C}_{Free}$}
\usefont{T1}{ptm}{m}{n}
\rput(8.254219,-1.3192188){\large $U\mathcal{C}_{Synt}$}
\psline[linewidth=0.04cm,arrowsize=0.05291667cm 2.0,arrowlength=1.4,arrowinset=0.4]{->}(7.976875,1.2457813)(7.976875,-1.0542188)
\usefont{T1}{ptm}{m}{n}
\rput(8.368281,0.05578125){$UF$}
\usefont{T1}{ptm}{m}{n}
\rput(11.514218,1.5007813){\large $\mathcal{G}$}
\psline[linewidth=0.04cm,arrowsize=0.05291667cm 2.0,arrowlength=1.4,arrowinset=0.4]{->}(11.276875,1.5457813)(8.776875,1.5457813)
\psline[linewidth=0.04cm,arrowsize=0.05291667cm 2.0,arrowlength=1.4,arrowinset=0.4]{->}(11.276875,1.2457813)(8.276875,-1.0542188)
\end{pspicture}
}
\]

We will prove an equivalence between the free category and the syntactic category. We should note at this point that our typing conventions of an involutive negation ($A \equiv (A^*)^*$) and negation invariance of the tensor unit ($I \equiv I^*$) implicitly introduce equivalence classes on types. Our proof of equivalence will be achieved by fully exhibiting the correspondence in objects and arrows between the two categories, showing that their notions of equality overlap, up to the equivalence classes that are induced by our typing conventions.

\subsubsection{A signature for dagger compact categories}
\label{Subsubsection:A signature for dagger compact categories}
The notion of signature that we will use combines the algebraic signature of \cite{Sel10} with the directed graph used by \cite{Mac98}. Consider a set of object variables $\Sigma_0$. Using the tensor operation, an associated tensor identity, and the duality operator star, we can construct the free $(\otimes, I, \Box^*)$-algebra over $\Sigma_0$. This corresponds to the set of all object terms or vertices in a compact closed category and will be denoted by $Dagger(\Sigma_0)$. Now consider a set $\Sigma_1$ of morphism variables or edges between those vertices. Let $dom, cod$ be a pair of functions such that $dom,cod: \Sigma_1 \longrightarrow Dagger(\Sigma_0)$. Throughout the rest of this section, we will be referring to the graph $\mathcal{G}$ as the directed graph whose vertices and edges are defined by $Dagger(\Sigma_0)$ and $\Sigma_1$. This graph forms the signature upon which we will base both the dagger lambda calculus and our description of the free dagger compact category; it includes all of the symbols but none of the logic of the languages that we want to describe.

\subsubsection{The free dagger compact category}
\label{Subsubsection:The free dagger compact category}
We will now show how to define the free dagger compact category $\mathcal{C}_{Free}$ as an interpretation of the graph $\mathcal{G}$. A highly intuitive introduction to free categories and how they can be generated from directed graphs can be found in \cite{Mac98}. Furthermore, a more extensive presentation of the process of constructing various kinds of free categories can be found in \cite{Sel10}. A more detailed presentation of the incremental buildup to the construction of free dagger compact categories can also be found in \cite{Abr05}.

The set of objects for the free category in this section will be the same as the set of vertices $Dagger(\Sigma_0)$ in the graph $\mathcal{G}$. The set of edges $\Sigma_1$ in the graph is used to generate morphisms for the free category. Thus, an edge of the form $f : A \rightarrow B$ generates an arrow in $\mathcal{C}_{Free}$ which we will denote as $\langle A,f,B \rangle$. The identities are represented by: $\langle A \rangle, \langle B \rangle, \langle C \rangle, \ldots$

The free category over a directed graph, also referred to as a path category, includes morphisms that correspond to the paths generated by combining adjoining edges in $\mathcal{G}$. These morphisms are formed using the free category's composition operation. Given two morphisms $\langle A,f,B \rangle$ and $\langle B,g,C \rangle$, we write their composition in $\mathcal{C}_{Free}$ as $\langle A,f,B,g,C \rangle$.

Since the free category is a monoidal category, it allows us to consider two of the graph's edges concurrently by bringing together their corresponding categorical morphisms using a monoidal tensor product. Given two morphisms $\langle A,f,B \rangle$ and $\langle C,h,D \rangle$, we write their tensor product as $\langle A \otimes C, f \otimes h, B \otimes D \rangle$.

The free category generated by the graph $\mathcal{G}$ also includes a number of morphisms that are part of the dagger compact logical structure. The monoidal natural isomorphisms are written as:
\begin{center}
    $\langle A \otimes (B \otimes C), \alpha_{A,B,C}, (A \otimes B) \otimes C \rangle$ \hskip 2cm
    $\langle I \otimes A, \lambda_A, A \rangle$ \hskip 2cm
    $\langle A \otimes I, \rho_A, A \rangle$
\end{center}

The symmetry isomorphism, and the units and counits are written as:
\begin{center}
    $\langle A \otimes B, \sigma_{A,B}, B \otimes A \rangle$ \hskip 2cm
    $\langle I, \eta_A, A^* \otimes A \rangle$ \hskip 2cm
    $\langle A \otimes A^*, \varepsilon_A, I \rangle$
\end{center}

For every map $\langle A,f,B \rangle$ in the free category, the dagger compact logical structure contains maps $f_*$ and $f^\dag$, represented by $\langle A^*,f_*,B^* \rangle$ and $\langle B,f^\dag,A \rangle$ respectively. When acting on compositions of paths, such as $\langle A,f,B,g,C,\ldots,X,h,Y,t,Z \rangle$, the dagger operator reverses the order of operations, yielding:
\[ \langle Z,t^\dag,Y,h^\dag,X,\ldots,C,g^\dag,B,f^\dag,A \rangle \]

\subsubsection{The dagger lambda calculus}
\label{Subsubsection:The dagger lambda calculus}
This section demonstrates how the graph signature $\mathcal{G}$ can be interpreted to derive the dagger lambda calculus. The set of types used by $\dag\lambda$ is precisely the set of vertices $Dagger(\Sigma_0)$ used in graph $\mathcal{G}$. Every edge $f : A \rightarrow B$ in $\Sigma_1$ is interpreted as a sequent $a:A \vdash_{\{f:a_* \otimes b\}} b:B$ up to alpha-equivalence. These interpretations essentially introduce constants, in our case $f:A^* \otimes B$, written as sequents that are reminiscent of $\eta$-expanded forms. The rest of the rules of the dagger lambda calculus can be used to process and combine sequents, yielding a richer logical structure.

\subsubsection{The syntactic category}
\label{Subsubsection:The syntactic category}
Following a method that is similar to \cite{Mel09}, we will define a process of Cut-elimination by using the soup reduction relation to partition the sequents of the dagger lambda calculus into equivalence classes. The resulting equivalence classes are modular proof invariants called \textit{denotations}. This section demonstrates how these denotations give rise to the \textit{syntactic category} $\mathcal{C}_{Synt}$, a dagger compact category. Sketches of the proofs are presented in the Appendix and more detailed versions can be found in \cite{Atz13}.

\begin{definition}[Denotations]
We will use the term \textit{denotations} to refer to the equivalence classes that are formed by partitioning the sequents of the lambda calculus according to soup equivalence. Hence, two sequents will correspond to the same denotation if and only if they are equivalent up to soup reduction.
\end{definition}

\begin{restatable}[The syntactic category]{theorem}{SyntacticCategory}
The types of the lambda calculus and the denotations generated by soup equivalence form a category whose objects are types and whose arrows are denotations.
\end{restatable}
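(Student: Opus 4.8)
The plan is to verify the categorical axioms directly, taking the objects to be types and the arrows in $\mathcal{C}_{Synt}(A,B)$ to be the denotations of sequents with a single input of type $A$ and a single output term of type $B$; any wider context $\Gamma$ may first be folded into one term using $\otimes L$ and the tensor form $\bigotimes\Gamma$, so this convention loses no generality. First I would fix the structural data. The identity on $A$ is the denotation of the Id axiom $x:A \vdash x:A$ (equivalently $id_A$). The composite of $f:A\to B$, represented by $a:A\vdash_{S_1} b:B$, with $g:B\to C$, represented by $b':B\vdash_{S_2} c:C$, is defined by the Cut rule to be the denotation of $a:A\vdash_{S_1\cup S_2\cup\{b:b'\}} c:C$. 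By the linearity constraint the two representatives are $\alpha$-renamed so as to share no variables before cutting, which is harmless because denotations are already taken up to $\alpha$-equivalence.

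Next I would establish that this composition is well defined on denotations, i.e.\ that Cut descends to the quotient by soup equivalence. The key observation is that Cut acts on soups purely by set union together with the insertion of one fresh connection, and that every soup propagation, consumption, or $\alpha$-renaming step available to a representative of $f$ or $g$ inside $S_1$ or $S_2$ remains available verbatim inside the merged soup $S_1\cup S_2\cup\{b:b'\}$. I would package this as a congruence lemma: soup-equivalent choices of representative for $f$ and $g$ are carried by Cut to soup-equivalent composites. To conclude, I would appeal to the already-established Strong normalisation and Confluence theorems, so that every denotation has a unique soup normal form; composing on these canonical representatives then exhibits the result as independent of the original choices.

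With composition well defined, the identity and associativity laws follow by soup reduction. For the right unit law, cutting $f$ against $id_B$ (the denotation of $y:B\vdash y:B$) produces $a:A\vdash_{S\cup\{b:y\}} y:B$; since $y$ is captured and constant-free, the consumption rule substitutes it away via $[b/y]$, returning exactly $f$, and the left unit law is symmetric. For associativity, both $(h\circ g)\circ f$ and $h\circ(g\circ f)$ yield a sequent whose soup is $S_1\cup S_2\cup S_3$ augmented by the same two cut connections; because set union is associative and commutative, the two soups coincide up to $\alpha$-renaming, so the two composites denote the same arrow.

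The main obstacle will be the well-definedness step: rigorously showing that Cut respects soup equivalence, rather than merely commuting with isolated reduction steps. The delicate point is that a consumption step performed inside one factor can trigger a global substitution reaching across the cut connection into the other factor, so I must verify that such a substitution can either be matched on the other side or be absorbed into the same $\alpha$-equivalence class. This is precisely where the \emph{Symmetry of substitution} theorem and \emph{Confluence} do the work, guaranteeing that any order of consuming the shared connection and the internal connections of $S_1$ and $S_2$ lands in a single $\alpha$-equivalence class, and hence names a single denotation.
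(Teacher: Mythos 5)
Your proposal follows essentially the same route as the paper's proof: composition is defined by Cut (after $\alpha$-renaming to keep variables disjoint), identities are the denotations of the Id axiom, the unit laws are discharged by consuming the cut connection with an identity sequent, and associativity comes from the fact that the cut connections simply accumulate in the soup as a set union. The one place you go beyond the paper is in explicitly arguing that Cut descends to the quotient by soup equivalence (via your congruence observation and the appeal to confluence and symmetry of substitution); the paper takes this well-definedness for granted, and also keeps arbitrary contexts $\Gamma$ as domains rather than folding them with $\otimes L$ as you do, but neither difference changes the substance of the argument.
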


\begin{restatable}[Dagger compact closure]{theorem}{DaggerCompactClosure}
The syntactic category is a dagger compact category.
\end{restatable}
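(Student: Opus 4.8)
The plan is to verify, one structural layer at a time, that $\mathcal{C}_{Synt}$ satisfies the defining axioms of a dagger compact category: a dagger symmetric monoidal category that is compact closed, with the dagger coherent with the compact structure. Since the objects are types and the arrows are denotations, the first task at every step is to exhibit a sequent witnessing each piece of structure and then to check that the required equations hold \emph{up to soup equivalence}. Well-definedness on denotations is underwritten by the fact that soup equivalence was shown to be confluent and strongly normalising, so every denotation has a canonical normal form against which equalities can be decided.

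First I would establish the symmetric monoidal structure. The tensor on objects is the type-level $\otimes$ with unit $I$, and the tensor on arrows is realised by the combinator $\bar{t} := \lambda f.\lambda g.\lambda(x_1 \otimes x_2).(fx_1 \otimes gx_2)$; bifunctoriality of this assignment follows directly from the bifunctoriality soup rule $\{a \otimes b : c \otimes d\} \longrightarrow \{a:c, b:d\}$. The associator $\alpha$, the unitors $\lambda$ and $\rho$ (supplied by the $\lambda_\Gamma$ and $\rho_\Gamma$ rules), and the symmetry $\sigma := \bar{s}$ are each given as denotations, and I would verify that the pentagon, triangle and hexagon coherence diagrams commute by reducing both composites to a common soup-normal form. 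Naturality of each isomorphism and the involutivity $\sigma_{B,A} \circ \sigma_{A,B} = id$ are checked the same way.

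Next I would supply the compact closed structure. The dual of $A$ is $A^*$; the unit $\eta_A : I \to A^* \otimes A$ is the denotation obtained from $a:A \vdash a:A$ by Curry, and the counit $\varepsilon_A : A \otimes A^* \to I$ dually (these are precisely the $\eta$-expanded sequents used to interpret edges). The two snake identities then amount to creating a soup connection via $\eta_A$ and immediately consuming it through the consumption rule, so each composite reduces to $id_A$. For the dagger I take the identity-on-objects assignment sending the denotation of $a:A \vdash_S b:B$ to that of $b:B \vdash_{S_*} a:A$, which is exactly the admissible $\dag$-flip; involutivity $(f^\dagger)^\dagger = f$ follows from $(S_*)_* = S$, and contravariant functoriality $(g \circ f)^\dagger = f^\dagger \circ g^\dagger$ from the behaviour of $(\cdot)_*$ on the Cut connection. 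I would then confirm that $\dag$ is a dagger symmetric monoidal functor, namely $(f \otimes g)^\dagger = f^\dagger \otimes g^\dagger$ with $\alpha, \lambda, \rho, \sigma$ all unitary, and finally the dagger-compact coherence relating $\varepsilon_A^\dagger$ to $\sigma \circ \eta_A$, which holds because $\eta_A$ and $\varepsilon_A$ are interchanged by Negation and $\dag$-flip together with a symmetry swap.

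The main obstacle I expect is twofold. First, confirming that every operation above (composition, tensor of arrows, and $\dag$) genuinely descends to denotations rather than merely to representative sequents; this rests on soup equivalence being a congruence with respect to these constructions. Second, discharging the coherence and yanking equations, where the bookkeeping of associators and unitors surrounding the $\eta_A$ and $\varepsilon_A$ composites is delicate. Both difficulties are tamed by strong normalisation and confluence: it suffices to reduce each side of every equation to its unique soup-normal form and compare them, up to the equivalence classes on types induced by $(A^*)^* \equiv A$ and $I \equiv I^*$.
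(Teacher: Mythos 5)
Your proposal is correct and follows essentially the same route as the paper: the paper decomposes the argument into a chain of preceding theorems (syntactic category, monoidal, symmetric monoidal, compact closed) and then, for this final theorem, defines $[f]^\dag$ as the denotation of the $\dag$-flipped sequent and checks $([f]^\dag)^\dag = [f]$ and $[\sigma_{A,A^*}] \circ [\varepsilon_A]^\dag = [\eta_A]$ by soup equivalence, exactly as you do. Your version is somewhat more explicit than the paper's sketch about well-definedness on denotations and about using confluence and strong normalisation to decide equality via normal forms, but the underlying strategy is identical.
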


\subsubsection{Proof of equivalence}
\label{Subsubsection:Proof of equivalence}
We will now prove that the free dagger compact category $\mathcal{C}_{Free}$ is equivalent to the syntactic category $\mathcal{C}_{Synt}$.

\begin{lemma}[Essentially surjective on objects]
The set of objects in the free category and the set of objects in the syntactic category are surjective, up to isomorphism.
\end{lemma}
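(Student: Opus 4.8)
The plan is to prove that the comparison functor $F : \mathcal{C}_{Free} \to \mathcal{C}_{Synt}$ is essentially surjective on objects by exploiting the fact that both categories draw their objects from the very same collection of formal expressions. First I would recall that the objects of $\mathcal{C}_{Free}$ are precisely the vertices $Dagger(\Sigma_0)$ of the signature graph $\mathcal{G}$, whereas the objects of $\mathcal{C}_{Synt}$ are the types of $\dag\lambda$, which are \emph{also} exactly $Dagger(\Sigma_0)$. Since $F$ sends each vertex to the type that interprets it, every object of the syntactic category is literally the image of the corresponding object of the free category; on the nose, $F$ is already surjective on objects, so the remaining content lies entirely in the ``up to isomorphism'' clause.

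The only discrepancy between the two object collections stems from the implicit equivalence classes on types in $\mathcal{C}_{Synt}$, induced by the typing conventions $A \equiv (A^*)^*$ and $I \equiv I^*$ together with the associativity and unit coherences of the monoidal product. A single object of $\mathcal{C}_{Synt}$ may therefore be the image of several distinct vertices that $\mathcal{C}_{Free}$ keeps apart. The key step is to show that any two such representatives are canonically isomorphic inside $\mathcal{C}_{Free}$, so that the fibre of $F$ over each syntactic object is a connected family of isomorphic objects. I would assemble these isomorphisms from the dagger compact structure already listed for $\mathcal{C}_{Free}$: the double-dual isomorphism $A \cong (A^*)^*$ built from the unit $\eta_A$ and counit $\varepsilon_A$, the self-duality $I \cong I^*$ arising from negation invariance of the tensor unit, and the coherence isomorphisms $\alpha_{A,B,C}$, $\lambda_A$, $\rho_A$ and $\sigma_{A,B}$ for reassociating and reordering tensors.

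The hard part will be verifying that these structural isomorphisms witness \emph{exactly} the congruence generated by $A \equiv (A^*)^*$ and $I \equiv I^*$, and nothing more. I would argue by induction on the construction of object terms that whenever two formal expressions are identified in $\mathcal{C}_{Synt}$, the corresponding composite of $\eta$, $\varepsilon$ and the coherence maps furnishes an isomorphism between them in $\mathcal{C}_{Free}$, and that this isomorphism is well defined independently of the chosen derivation, by appeal to coherence for dagger compact categories. Once this is in place, every object of $\mathcal{C}_{Synt}$ is isomorphic to the image under $F$ of some object of $\mathcal{C}_{Free}$, which is precisely the assertion of essential surjectivity.
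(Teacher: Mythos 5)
Your proposal matches the paper's argument: both categories draw their objects from $Dagger(\Sigma_0)$, with the syntactic category's objects being the quotient by $(A^*)^* \equiv A$ and $I^* \equiv I$, so $F$ is surjective on objects outright and essential surjectivity follows. The extra work you flag as the ``hard part''---showing that distinct preimages of a syntactic type are canonically isomorphic in $\mathcal{C}_{Free}$ via $\eta$, $\varepsilon$ and coherence---is not actually required for essential surjectivity (a literally surjective object map satisfies it with identity isomorphisms), and note that the paper's typing conventions quotient only by the two negation identities, not by associativity or unit coherences as your proposal suggests.
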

\begin{proof}
Recall $Dagger(\Sigma_0)$; the free $(\otimes, I, \Box^*)$-algebra over the set of object variables $\Sigma_0$. The sets of objects in $\mathcal{C}_{Free}$ and $\mathcal{C}_{Synt}$ both correspond to $Dagger(\Sigma_0)$, up to the equivalence classes induced by $(A^*)^* \equiv A$ and $I^* \equiv I$.
\end{proof}

\begin{lemma}[Equal arrows correspond to equal denotations]
If two arrows, $\langle A,f,B \rangle$ and $\langle A,f^\prime,B \rangle$ are equal in the free category, then they will also be equal in the syntactic category: $[f] = [f^\prime] : A \rightarrow B$.
\end{lemma}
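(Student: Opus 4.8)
The plan is to exploit the universal characterisation of the free category. Equality of two parallel arrows $\langle A,f,B\rangle = \langle A,f^\prime,B\rangle$ in $\mathcal{C}_{Free}$ is, by construction of the free dagger compact category over $\mathcal{G}$, precisely membership in the least congruence -- compatible with composition, tensor $\otimes$, the dual $(-)_*$ and the dagger $(-)^\dag$ -- generated by the defining axioms of dagger compact categories: the monoidal coherence equations (naturality and the pentagon/triangle laws for $\alpha,\lambda,\rho$), the symmetric monoidal axioms (naturality and the hexagon for $\sigma$, together with $\sigma\circ\sigma = id$), the compact-closure yanking equations relating $\eta_A$ and $\varepsilon_A$, and the dagger axioms stating that $(-)^\dag$ is an involutive, identity-on-objects, contravariant strict monoidal functor compatible with the compact structure. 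Hence it suffices to show that the interpretation $\ell$ sends each such generating equation to a pair of sequents that are soup-equivalent, and then to induct on the length of a derivation witnessing the equality.

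First I would verify that $\ell$ is compatible with the categorical operations. Composition $\langle A,f,B,g,C\rangle$ is interpreted by a Cut that inserts the connection joining the output of the $f$-sequent to the input of the $g$-sequent; the tensor $\langle A\otimes C, f\otimes h, B\otimes D\rangle$ is interpreted by $\otimes R$; the dual and the dagger are interpreted by the Negation rule and by the admissible $\dag$-flip respectively. Since denotations are by definition soup-equivalence classes, and since Cut and $\otimes R$ act on sequents as a whole, soup-equivalent inputs are sent to soup-equivalent outputs; so $\ell$ descends to a well-defined map on denotations respecting all four operations. This reduces the problem to the generating equations alone.

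The main step is to check each generating axiom as a soup equivalence. For the structural isomorphisms $\alpha,\lambda,\rho,\sigma$ this is essentially automatic: the \emph{bifunctoriality} rule makes every bracketing of a tensored connection soup-equivalent, the unit rules $\lambda_\Gamma$ and $\rho_\Gamma$ together with the \emph{cancellation} rule $\{1:1\}\to S$ absorb the unit $I$, and $\sigma$ is realised by $\bar{s}$, whose self-composition reduces to $id$ by bifunctoriality followed by consumption; the pentagon, triangle and hexagon then hold because both sides reduce to the same re-bracketed, unit-normalised soup. For the compact structure, $\eta_A$ and $\varepsilon_A$ become the cup and cap sequents built from an Id-pair and its Negation, and the yanking equation reduces -- by splitting the tensored connection with bifunctoriality, consuming the resulting pair connection, and clearing the residual $\{1:1\}$ by cancellation -- to a bare identity denotation. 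The dagger axioms follow from the admissibility of $\dag$-flip and its interchangeability with Negation, which deliver the contravariant involutive behaviour and the compatibility of $(-)^\dag$ with the units and counits (up to the symmetry $\sigma$ and the dual) as soup equivalences.

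The hardest part will be these compact-closure and dagger-compatibility equations, because there the specific interaction of the \emph{trace} rule $\{x:_A x\}\to\{D_A:1\}$ and the cancellation rule with the trivialised De Morgan duality must be tracked carefully: one must confirm that yanking produces an \emph{unscaled} identity rather than an identity multiplied by a dimension $D_A$, i.e. that no spurious scalar is introduced when the cup meets the cap. Once this bookkeeping is in place, the induction on derivation length closes the argument, yielding $[f]=[f^\prime]:A\to B$ in $\mathcal{C}_{Synt}$.
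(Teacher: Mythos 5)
Your proposal is correct and follows essentially the same route as the paper: the paper's own (much terser) proof likewise observes that equality in $\mathcal{C}_{Free}$ is exactly the congruence generated by the dagger compact axioms, and that since $\mathcal{C}_{Synt}$ has already been shown to be dagger compact (via the soup-reduction verifications of the monoidal, symmetric, compact-closure and dagger axioms in the Appendix), every such equation transfers to an equality of denotations. Your version merely makes explicit the compatibility of the interpretation with composition, tensor, dual and dagger, and the induction on the derivation of the equality, which the paper leaves implicit.
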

\begin{proof}
The structure of the free category $\mathcal{C}_{Free}$ imposes the minimum number of equalities for a category to be dagger compact. Moreover, both the free category and the syntactic category derive their symbols from the same signature graph $\mathcal{G}$. Since we have already shown that $\mathcal{C}_{Synt}$ is dagger compact, the same steps can be used to show that any arrows $\langle A,f,B \rangle$ and $\langle A,f^\prime,B \rangle$ that are equal in the free category correspond to equal denotations $[f] = [g]$ in the syntactic category.
\end{proof}

\begin{lemma}[Equal denotations correspond to equal arrows]
Any denotations that are equal in the syntactic category correspond to equal arrows in the free category.
\end{lemma}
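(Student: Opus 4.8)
The plan is to read this lemma as the \emph{faithfulness} of the interpretation functor $F : \mathcal{C}_{Free} \to \mathcal{C}_{Synt}$ from the diagram above: together with essential surjectivity on objects and the previous lemma, faithfulness is exactly what is needed to complete the equivalence $\mathcal{C}_{Free} \simeq \mathcal{C}_{Synt}$. Whereas the previous lemma exploited the \emph{minimality} of the free category (its universal property lets $F$ transport every equality of $\mathcal{C}_{Free}$ into the dagger compact category $\mathcal{C}_{Synt}$), faithfulness demands a map in the opposite direction. First I would construct a back-interpretation $G : \mathcal{C}_{Synt} \to \mathcal{C}_{Free}$ and show that $G \circ F = \mathrm{id}_{\mathcal{C}_{Free}}$; this immediately yields the lemma, since $[f] = [f']$ in $\mathcal{C}_{Synt}$ would then give $f = G(F(f)) = G(F(f')) = f'$ in $\mathcal{C}_{Free}$.

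To build $G$, I would interpret each sequent of $\dag\lambda$ as an arrow of $\mathcal{C}_{Free}$ by induction on its derivation, using the dagger compact structure of $\mathcal{C}_{Free}$ already exhibited in the construction: Id becomes an identity, Cut becomes composition mediated by a unit/counit, $\otimes R$ becomes the monoidal tensor, Negation becomes the operation $f \mapsto f_*$, Curry and Uncurry become the name and coname supplied by compact closure, Exchange becomes the symmetry $\sigma$, the unit rules become the unitors $\lambda$ and $\rho$, and each constant edge of $\mathcal{G}$ is sent to its generating morphism $\langle A, f, B\rangle$. This produces a candidate arrow of $\mathcal{C}_{Free}$ for every sequent.

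The main obstacle is showing that this assignment is \emph{invariant under soup equivalence}, so that it descends to denotations and actually defines $G$ on arrows. Concretely, I would verify that $\alpha$-renaming and each soup rule are sound for the categorical interpretation: $\alpha$-equivalence is immediate because the interpretation never refers to variable names; \emph{bifunctoriality} reduces to the factorisation of a unit/counit on $A \otimes B$ as a tensor of units/counits composed with the symmetry $\sigma$; \emph{trace} realises the scalar $D_A$ as the value of a closed loop, i.e.\ as $\mathrm{tr}(\mathrm{id}_A)$; \emph{cancellation} is the fact that the scalar $1$ is the identity of $I$; and the \emph{consumption} rule is a yanking of the snake equations, since substituting a captured term along a soup connection composes it with a unit and a counit that then straighten the wire. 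Each of these is an equality that holds in \emph{every} dagger compact category, and hence in $\mathcal{C}_{Free}$, so soup-equivalent sequents receive equal interpretations and $G$ is well defined; functoriality of $G$ then follows from the way composition and tensor of denotations are themselves built out of Cut and $\otimes R$.

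Finally I would close the argument with the universal property of the free category. Both $G \circ F$ and $\mathrm{id}_{\mathcal{C}_{Free}}$ are dagger compact functors that agree on the generating objects and edges of $\mathcal{G}$ (a generator $\langle A, f, B\rangle$ is sent by $F$ to the denotation of $a:A \vdash_{\{f : a_* \otimes b\}} b:B$ and returned by $G$ to $\langle A, f, B\rangle$), so by freeness of $\mathcal{C}_{Free}$ they must coincide. This establishes $G \circ F = \mathrm{id}_{\mathcal{C}_{Free}}$ and hence the lemma. I expect the soundness of the \emph{consumption} rule to be the most delicate point, because it is the rule whose scope is global and whose correctness rests on the compact-closed yanking equations rather than on a single structural isomorphism.
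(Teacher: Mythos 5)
Your proposal is correct and follows essentially the same route as the paper: the paper's proof also reduces the claim to an induction on the soup reduction linking the two sequents, checking that each of the four soup rules (bifunctoriality, trace, cancellation, consumption) is validated by an equation holding in the free dagger compact category, with the details deferred to the thesis. Your additional scaffolding --- the explicit back-interpretation $G$ defined by induction on derivations and the appeal to the universal property to get $G \circ F = \mathrm{id}_{\mathcal{C}_{Free}}$ --- merely makes explicit the correspondence between denotations and free-category arrows that the paper leaves implicit, and your identification of consumption/yanking as the delicate case matches where the real work lies.
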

\begin{proof}
Let $[f] : \Gamma \rightarrow B$ and $[g] : \Gamma \rightarrow B$ be denotations in the syntactic category such that $[f] = [g]$. Since the two denotations are equal, the sequents they represent in the dagger lambda calculus must be equivalent up to soup reduction. Without loss of generality, let's assume that $[f]$ represents a sequent $J_1$ and that $[g]$ represents a sequent $J_2$, where $J_1 \rightarrow J_2$. The soup reduction relation consists of four soup rules: \textit{bifunctoriality}, \textit{trace}, \textit{cancellation}, and \textit{consumption}. We prove this lemma by induction on the structure of the soup reduction linking $J_1$ and $J_2$. The details of the induction have been omitted in this paper; they are, available in \cite{Atz13}. This shows that $\langle \Gamma, f, B \rangle = \langle \Gamma, g, B \rangle$.
\end{proof}

\begin{theorem}[Equivalence between the free category and the syntactic category]
The free dagger compact category $\mathcal{C}_{Free}$ and the syntactic category $\mathcal{C}_{Synt}$ are equivalent.
\end{theorem}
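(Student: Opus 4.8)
The plan is to exhibit the functor $F : \mathcal{C}_{Free} \to \mathcal{C}_{Synt}$ appearing in the correspondence diagram and to verify that it is full, faithful, and essentially surjective on objects, since these three conditions together characterise an equivalence of categories. The functor $F$ is obtained from the universal property of the free dagger compact category: because $\mathcal{C}_{Synt}$ has already been shown to be dagger compact and receives an interpretation of the signature $\mathcal{G}$ — each edge $f : A \to B$ being sent to the denotation of the sequent $a:A \vdash_{\{f:a_* \otimes b\}} b:B$ — there is a unique dagger-compact-structure-preserving functor $F$ extending this interpretation of the generators, which is exactly the arrow labelled $!F$ in the diagram. On objects, $F$ acts as the identity-on-vertices map, sending a vertex of $\mathcal{G}$ to the corresponding type of $\dag\lambda$; essential surjectivity is then immediate from the lemma \emph{Essentially surjective on objects}, since both object sets are $Dagger(\Sigma_0)$ modulo the same equivalences $(A^*)^* \equiv A$ and $I^* \equiv I$.

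For the action on arrows, $F$ sends a free-category morphism $\langle A,f,B \rangle$ — and, more generally, any composite, tensor, dual, or dagger of generators — to its corresponding denotation $[f] : A \to B$. Well-definedness and functoriality of $F$ amount to showing that any two arrows identified in $\mathcal{C}_{Free}$ are sent to the same denotation, which is precisely the content of the lemma \emph{Equal arrows correspond to equal denotations}. Faithfulness of $F$ is the converse statement: if $[f] = [g]$ in $\mathcal{C}_{Synt}$ then $f = g$ in $\mathcal{C}_{Free}$, which is exactly the lemma \emph{Equal denotations correspond to equal arrows}. Thus two of the three defining conditions of an equivalence follow directly by quoting the already-established lemmas, with no further work.

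The remaining condition, and the step I expect to be the main obstacle, is \textbf{fullness}: every denotation of $\mathcal{C}_{Synt}$ must be the image under $F$ of some morphism of the free category. Here I would appeal to the Cut-elimination procedure underlying the definition of the syntactic category, arguing by induction on the derivation of a sequent that every typing judgement is soup-equivalent to one assembled solely from the Id axiom, the signature constants, and the structural rules (Cut, $\otimes R$, Negation, Curry and Uncurry, Exchange, and the unit rules $\lambda_\Gamma, \rho_\Gamma$). Each such building block has a designated counterpart among the generators and structural isomorphisms of $\mathcal{C}_{Free}$: composition for Cut, the monoidal product for $\otimes R$, the duality maps for Negation, and the units, counits and symmetry for the compact structure. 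Consequently the canonical form of any denotation exhibits it explicitly as $F$ applied to a free-category morphism. The delicate point will be checking that this witnessing morphism does not depend on the representative sequent chosen; this is where the \emph{Confluence} and \emph{Strong normalisation} results, together with the \emph{Symmetry of substitution} lemma, are needed, guaranteeing that distinct soup-reduction paths land on the same free-category arrow and hence that the decomposition descends to denotations.

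Having verified that $F$ is essentially surjective on objects, faithful, and full, I would conclude that $F$ is an equivalence of categories, and therefore $\mathcal{C}_{Free} \simeq \mathcal{C}_{Synt}$.
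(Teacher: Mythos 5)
Your proposal follows the same overall architecture as the paper's proof: exhibit the canonical functor $F$ from the universal property of the free category, get essential surjectivity on objects from the corresponding lemma, and get well-definedness and faithfulness from the two lemmas relating equality of arrows and equality of denotations. Where you genuinely diverge is on \emph{fullness}. The paper's proof simply states that since arrows equal in $\mathcal{C}_{Free}$ are equal in $\mathcal{C}_{Synt}$ and vice versa, ``the functor $F$ is full and faithful'' --- but those two lemmas only give that $F$ is well-defined and injective on hom-sets; they say nothing about whether every denotation lies in the image of $F$. You correctly identify this as the missing third condition and sketch the right argument for it: induct on the derivation of a sequent, noting that every typing judgement is built from Id, the signature constants, and the structural rules, each of which has a designated counterpart among the generators and structural morphisms of $\mathcal{C}_{Free}$, so every denotation is $F$ of some free arrow. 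This is a real improvement in rigour over the paper's own one-paragraph proof. One small simplification: for fullness you only need the \emph{existence} of a preimage for each denotation, not a canonical one, so the appeal to confluence, strong normalisation and symmetry of substitution to make the witnessing morphism independent of the chosen representative is unnecessary at that step (it would matter only if you were constructing an explicit inverse functor rather than verifying full, faithful, and essentially surjective).
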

\begin{proof}
The two categories derive their symbols from a common signature graph $\mathcal{G}$. As we have already shown, bearing in mind the equivalence classes that we have induced on types, the categories are essentially surjective on objects. Moreover, arrows that are equal in the free category are equal in the syntactic category and vice versa. This means that the functor $F$ is \textit{full} and \textit{faithful}, causing the notions of equality between arrows to overlap in these two categories. Consequently, the categories are equivalent.
\end{proof}

\begin{corollary}[Internal language]
The dagger lambda calculus is an internal language for dagger compact categories.
\end{corollary}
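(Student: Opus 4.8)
The plan is to read off the corollary from the equivalence $\mathcal{C}_{Free} \simeq \mathcal{C}_{Synt}$ established in the previous theorem, by transporting along it the universal (classifying) property that the free dagger compact category already enjoys by construction. Recall that a type theory counts as an \emph{internal language} for a class of categories precisely when its term-model construction produces, for every signature, the classifying object in that class: models of the theory in an arbitrary dagger compact category $\mathcal{D}$ must then be in bijective correspondence, up to the appropriate notion of equivalence, with structure-preserving functors out of the syntactic category. I would therefore organise the whole argument around moving the universal property of $\mathcal{C}_{Free}$ across the functor $F$.

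First I would fix a signature $\mathcal{G} = (Dagger(\Sigma_0), \Sigma_1, dom, cod)$ and recall the two interpretations already set up in this section: the free dagger compact category $\mathcal{C}_{Free}$, and the dagger lambda calculus $\dag\lambda$, whose Cut-elimination quotient under soup equivalence yields the denotations forming $\mathcal{C}_{Synt}$. By the theorem on dagger compact closure, $\mathcal{C}_{Synt}$ is a genuine dagger compact category, and by the preceding equivalence theorem the canonical functor $F \colon \mathcal{C}_{Free} \to \mathcal{C}_{Synt}$ indicated in the correspondence diagram is full, faithful and essentially surjective, hence an equivalence of categories with a pseudo-inverse $G \colon \mathcal{C}_{Synt} \to \mathcal{C}_{Free}$.

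Next I would invoke freeness. By construction $\mathcal{C}_{Free}$ is free on $\mathcal{G}$: any interpretation $I \colon \mathcal{G} \to \mathcal{D}$ of the vertices and edges of $\mathcal{G}$ in a dagger compact category $\mathcal{D}$ extends uniquely, up to canonical isomorphism, to a dagger compact functor $\hat{I} \colon \mathcal{C}_{Free} \to \mathcal{D}$ preserving the tensor, unit, duality, symmetry and dagger. Precomposing with $G$ shows that $\mathcal{C}_{Synt}$ carries the same universal property, since $\hat{I} \circ G \colon \mathcal{C}_{Synt} \to \mathcal{D}$ recovers $I$ up to the equivalence and is essentially unique. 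Because an interpretation of $\mathcal{G}$ in $\mathcal{D}$ is exactly a model of $\dag\lambda$ in $\mathcal{D}$, and because the generic such model lives inside $\mathcal{C}_{Synt}$ itself, the syntactic category is the classifying category of the dagger lambda calculus, which is the statement that $\dag\lambda$ is an internal language for dagger compact categories.

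The hard part will not be the transport step, which is routine once the equivalence is in hand, but rather pinning down the ambient notion of internal language so that the claim is genuinely stronger than the equivalence for a single signature. Concretely, I would need to verify that the assignment $\mathcal{G} \mapsto \mathcal{C}_{Synt}$ is functorial in the signature and that $F$ preserves \emph{all} of the dagger compact structure, not merely the underlying hom-sets witnessed by fullness and faithfulness; only then does the transported property live in the 2-category of dagger compact categories and strong monoidal dagger functors, where internal-language statements properly belong. The one real subtlety is bookkeeping the equivalence classes forced by the typing conventions $(A^*)^* \equiv A$ and $I \equiv I^*$, which must be shown to coincide exactly with the identifications already present in $Dagger(\Sigma_0)$, so that no objects or arrows are spuriously glued or split when passing between the syntax and the free model.
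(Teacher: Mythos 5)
Your proposal is correct and follows essentially the same route as the paper, which states this corollary with no proof at all, treating it as an immediate consequence of the equivalence $\mathcal{C}_{Free} \simeq \mathcal{C}_{Synt}$ — precisely the transport-of-universal-property argument you spell out. Your additional remarks on functoriality in the signature and on $F$ preserving the full dagger compact structure identify genuine gaps the paper leaves implicit, so your version is if anything more careful than the original.
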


\section{Conclusion}
\label{Section:Conclusion}
This paper has presented a lambda calculus for dagger compact categories. As we have seen from \cite{AC04}, this language can be used to represent a subset of quantum computation, namely, quantum protocols. The dagger lambda calculus was shown to satisfy subject reduction, confluence, strong normalisation, and consistency, while the language was shown to be an internal language for dagger compact categories.

In order to be able to cover all of quantum computation, commonly referred to as universal quantum computation, we need a language with classical control. One way of adding this feature in a denotationally sound way is by extending our language's axiomatisation to include classical basis states. This can be achieved by introducing complementary classical structures, like the ones built on top of the dagger compact structure in \cite{CPP10}, \cite{CD11} and \cite{CPP08}. This work is partly covered by \cite{Atz13} and will be included in a forthcoming paper.

\section*{Acknowledgements}
\label{Section:Acknowledgements}
I would like to thank Samson Abramsky, Bob Coecke, Prakash Panangaden, Jonathan Barrett, and the anonymous reviewers for their invaluable comments and insights.

\bibliographystyle{eptcs}
\bibliography{paper}

\begin{thebibliography}{10}
\providecommand{\bibitemdeclare}[2]{}
\providecommand{\surnamestart}{}
\providecommand{\surnameend}{}
\providecommand{\urlprefix}{Available at }
\providecommand{\url}[1]{\texttt{#1}}
\providecommand{\href}[2]{\texttt{#2}}
\providecommand{\urlalt}[2]{\href{#1}{#2}}
\providecommand{\doi}[1]{doi:\urlalt{http://dx.doi.org/#1}{#1}}
\providecommand{\bibinfo}[2]{#2}

\bibitemdeclare{article}{Abr93}
\bibitem{Abr93}
\bibinfo{author}{Samson \surnamestart Abramsky\surnameend}
  (\bibinfo{year}{1993}): \emph{\bibinfo{title}{Computational interpretations
  of linear logic}}.
\newblock {\sl \bibinfo{journal}{Theoretical Computer Science}}
  \bibinfo{volume}{111}, pp. \bibinfo{pages}{3--57}, \doi{10.1.1.16.2984}.

\bibitemdeclare{inproceedings}{Abr05}
\bibitem{Abr05}
\bibinfo{author}{Samson \surnamestart Abramsky\surnameend}
  (\bibinfo{year}{2005}): \emph{\bibinfo{title}{Abstract Scalars, Loops, and
  Free Traced and Strongly Compact Closed Categories}}.
\newblock In: {\sl \bibinfo{booktitle}{In Proceedings of the First Conference
  on Algebra and Coalgebra in Computer Science (CALCO 2005)}},
  \bibinfo{volume}{3629}, \bibinfo{publisher}{Springer Lecture Notes in
  Computer Science}, pp. \bibinfo{pages}{1--31}, \doi{10.1007/11548133\_1}.
\newblock \bibinfo{note}{(arXiv:0910.2931v1 [quant-ph])}.

\bibitemdeclare{article}{ABP99}
\bibitem{ABP99}
\bibinfo{author}{Samson \surnamestart Abramsky\surnameend},
  \bibinfo{author}{Rick \surnamestart Blute\surnameend} \&
  \bibinfo{author}{Prakash \surnamestart Panangaden\surnameend}
  (\bibinfo{year}{1999}): \emph{\bibinfo{title}{Nuclear and trace ideals in
  tensored $*$-categories}}.
\newblock {\sl \bibinfo{journal}{Journal of Pure and Applied Algebra}}
  \bibinfo{volume}{143}, pp. \bibinfo{pages}{3--47},
  \doi{10.1016/S0022-4049(98)00106-6}.

\bibitemdeclare{inproceedings}{AC04}
\bibitem{AC04}
\bibinfo{author}{Samson \surnamestart Abramsky\surnameend} \&
  \bibinfo{author}{Bob \surnamestart Coecke\surnameend} (\bibinfo{year}{2004}):
  \emph{\bibinfo{title}{A categorical semantics of quantum protocols}}.
\newblock In: {\sl \bibinfo{booktitle}{Proceedings of the 19th IEEE conference
  on Logic in Computer Science (LiCS'04)}}, \bibinfo{publisher}{IEEE Computer
  Science Press}, \doi{10.1109/LICS.2004.1}.
\newblock \bibinfo{note}{(arXiv:quant-ph/0402130v5)}.

\bibitemdeclare{article}{AD06}
\bibitem{AD06}
\bibinfo{author}{Samson \surnamestart Abramsky\surnameend} \&
  \bibinfo{author}{Ross \surnamestart Duncan\surnameend}
  (\bibinfo{year}{2006}): \emph{\bibinfo{title}{A Categorical Quantum Logic}}.
\newblock {\sl \bibinfo{journal}{Mathematical Structures in Computer Science}}
  \bibinfo{volume}{16}, pp. \bibinfo{pages}{469--489},
  \doi{10.1017/S0960129506005275}.
\newblock \bibinfo{note}{(arXiv:quant-ph/0512114v1)}.

\bibitemdeclare{incollection}{AT10}
\bibitem{AT10}
\bibinfo{author}{Samson \surnamestart Abramsky\surnameend} \&
  \bibinfo{author}{Nikos \surnamestart Tzevelekos\surnameend}
  (\bibinfo{year}{2010}): \emph{\bibinfo{title}{Introduction to categories and
  categorical logic}}.
\newblock In \bibinfo{editor}{Bob \surnamestart Coecke\surnameend}, editor:
  {\sl \bibinfo{booktitle}{New Structures for Physics}},
  \bibinfo{publisher}{Springer Lecture Notes in Physics},
  \doi{10.1007/978-3-642-12821-9\_1}.
\newblock \bibinfo{note}{(arXiv:1102.1313v1 [math.CT])}.

\bibitemdeclare{phdthesis}{Atz13}
\bibitem{Atz13}
\bibinfo{author}{Philip \surnamestart Atzemoglou\surnameend}
  (\bibinfo{year}{2013}): \emph{\bibinfo{title}{Higher-order semantics for
  quantum programming languages with classical control}}.
\newblock Ph.D. thesis, \bibinfo{school}{Oxford University Computing
  Laboratory}.
\newblock \bibinfo{note}{(arXiv:1311.6563v1 [cs.LO])}.

\bibitemdeclare{incollection}{BS10}
\bibitem{BS10}
\bibinfo{author}{John \surnamestart Baez\surnameend} \&
  \bibinfo{author}{Michael \surnamestart Stay\surnameend}
  (\bibinfo{year}{2010}): \emph{\bibinfo{title}{Physics, topology, logic and
  computation: A Rosetta Stone}}.
\newblock In \bibinfo{editor}{Bob \surnamestart Coecke\surnameend}, editor:
  {\sl \bibinfo{booktitle}{New Structures for Physics}},
  \bibinfo{publisher}{Springer Lecture Notes in Physics},
  \doi{10.1007/978-3-642-12821-9\_2}.
\newblock \bibinfo{note}{(arXiv:0903.0340v3 [quant-ph])}.

\bibitemdeclare{inproceedings}{CD08}
\bibitem{CD08}
\bibinfo{author}{Bob \surnamestart Coecke\surnameend} \& \bibinfo{author}{Ross
  \surnamestart Duncan\surnameend} (\bibinfo{year}{2008}):
  \emph{\bibinfo{title}{Interacting quantum observables}}.
\newblock In: {\sl \bibinfo{booktitle}{Proceedings of the 35th International
  Colloquium on Automata, Languages and Programming}},
  \bibinfo{publisher}{Lecture Notes in Computer Science 5126, Springer-Verlag},
  pp. \bibinfo{pages}{298–--310}, \doi{10.1007/978-3-540-70583-3\_25}.
\newblock \bibinfo{note}{(arXiv:0906.4725v1 [quant-ph])}.

\bibitemdeclare{article}{CD11}
\bibitem{CD11}
\bibinfo{author}{Bob \surnamestart Coecke\surnameend} \& \bibinfo{author}{Ross
  \surnamestart Duncan\surnameend} (\bibinfo{year}{2011}):
  \emph{\bibinfo{title}{Interacting quantum observables: Categorical algebra
  and diagrammatics}}.
\newblock {\sl \bibinfo{journal}{New Journal of Physics}} \bibinfo{volume}{13},
  p. \bibinfo{pages}{043016}, \doi{10.1088/1367-2630/13/4/043016}.
\newblock \bibinfo{note}{(arXiv:0906.4725v3 [quant-ph])}.

\bibitemdeclare{article}{CP06}
\bibitem{CP06}
\bibinfo{author}{Bob \surnamestart Coecke\surnameend} \&
  \bibinfo{author}{{\'{E}ric Oliver}~\surnamestart Paquette\surnameend}
  (\bibinfo{year}{2006}): \emph{\bibinfo{title}{POVMs and Naimark's theorem
  without sums}}.
\newblock {\sl \bibinfo{journal}{Electronic Notes in Theoretical Computer
  Science}}, \doi{10.1016/j.entcs.2008.04.015}.
\newblock \bibinfo{note}{(arXiv:quant-ph/0608072)}.

\bibitemdeclare{incollection}{CPP10}
\bibitem{CPP10}
\bibinfo{author}{Bob \surnamestart Coecke\surnameend},
  \bibinfo{author}{{\'{E}ric Oliver}~\surnamestart Paquette\surnameend} \&
  \bibinfo{author}{Du\v{s}ko \surnamestart Pavlovi\'{c}\surnameend}
  (\bibinfo{year}{2010}): \emph{\bibinfo{title}{Classical and quantum
  structuralism}}.
\newblock In \bibinfo{editor}{S.~\surnamestart Gay\surnameend} \&
  \bibinfo{editor}{I.~\surnamestart Mackie\surnameend}, editors: {\sl
  \bibinfo{booktitle}{Semantic Techniques in Quantum Computation}},
  \bibinfo{publisher}{Cambridge University Press}.
\newblock \bibinfo{note}{(arXiv:0904.1997v2 [quant-ph])}.

\bibitemdeclare{article}{CPP08}
\bibitem{CPP08}
\bibinfo{author}{Bob \surnamestart Coecke\surnameend},
  \bibinfo{author}{{\'{E}ric Oliver}~\surnamestart Paquette\surnameend} \&
  \bibinfo{author}{Simon \surnamestart Perdrix\surnameend}
  (\bibinfo{year}{2008}): \emph{\bibinfo{title}{Bases in diagrammatic quantum
  protocols}}.
\newblock {\sl \bibinfo{journal}{Electronic Notes in Theoretical Computer
  Science}} \bibinfo{volume}{218}, pp. \bibinfo{pages}{131--152},
  \doi{10.1016/j.entcs.2008.10.009}.
\newblock \bibinfo{note}{(arXiv:0808.1029v1 [quant-ph])}.

\bibitemdeclare{incollection}{CP07}
\bibitem{CP07}
\bibinfo{author}{Bob \surnamestart Coecke\surnameend} \&
  \bibinfo{author}{Du\v{s}ko \surnamestart Pavlovi\'{c}\surnameend}
  (\bibinfo{year}{2007}): \emph{\bibinfo{title}{Quantum measurements without
  sums}}.
\newblock In \bibinfo{editor}{G.~\surnamestart Chen\surnameend},
  \bibinfo{editor}{L.~\surnamestart Kauffman\surnameend} \&
  \bibinfo{editor}{S.~\surnamestart Lamonaco\surnameend}, editors: {\sl
  \bibinfo{booktitle}{Mathematics of Quantum Computing and Technology}},
  \bibinfo{publisher}{Taylor and Francis}, pp. \bibinfo{pages}{567–--604},
  \doi{10.1201/9781584889007.ch16}.
\newblock \bibinfo{note}{(arXiv:quant-ph/0608035)}.

\bibitemdeclare{article}{DR89}
\bibitem{DR89}
\bibinfo{author}{Sergio \surnamestart Doplicher\surnameend} \&
  \bibinfo{author}{{John E.} \surnamestart Roberts\surnameend}
  (\bibinfo{year}{1989}): \emph{\bibinfo{title}{A new duality theory for
  compact groups}}.
\newblock {\sl \bibinfo{journal}{Inventiones mathematicae}}
  \bibinfo{volume}{98}(\bibinfo{number}{1}), pp. \bibinfo{pages}{157--218},
  \doi{10.1007/BF01388849}.
\newblock \urlprefix\url{http://eudml.org/doc/143725}.

\bibitemdeclare{incollection}{Klo92}
\bibitem{Klo92}
\bibinfo{author}{{Jan Willem} \surnamestart Klop\surnameend}
  (\bibinfo{year}{1992}): \emph{\bibinfo{title}{Term rewriting systems}}.
\newblock In \bibinfo{editor}{S.~\surnamestart Abramsky\surnameend},
  \bibinfo{editor}{{D.M.} \surnamestart Gabbay\surnameend} \&
  \bibinfo{editor}{{T.S.E.} \surnamestart Maibaum\surnameend}, editors: {\sl
  \bibinfo{booktitle}{Handbook of Logic in Computer Science}},
  \bibinfo{volume}{2}, \bibinfo{publisher}{Oxford University Press}, pp.
  \bibinfo{pages}{1–--116}, \doi{10.1.1.35.425}.

\bibitemdeclare{book}{Mac98}
\bibitem{Mac98}
\bibinfo{author}{Saunders \surnamestart {Mac Lane}\surnameend}
  (\bibinfo{year}{1998}): \emph{\bibinfo{title}{Categories for the Working
  Mathematician}}, \bibinfo{edition}{second} edition.
\newblock \bibinfo{publisher}{Springer}, \doi{10.1007/978-1-4757-4721-8}.

\bibitemdeclare{article}{Mel09}
\bibitem{Mel09}
\bibinfo{author}{{Paul-Andr\'{e}} \surnamestart Melli\`{e}s\surnameend}
  (\bibinfo{year}{2009}): \emph{\bibinfo{title}{Categorical Semantics of Linear
  Logic}}.
\newblock {\sl \bibinfo{journal}{Panoramas et synth\`{e}ses - Soci\'{e}t\'{e}
  math\'{e}matique de France}} (\bibinfo{number}{27}), pp.
  \bibinfo{pages}{1--196}, \doi{10.1.1.62.5117}.

\bibitemdeclare{inproceedings}{Sel04a}
\bibitem{Sel04a}
\bibinfo{author}{Peter \surnamestart Selinger\surnameend}
  (\bibinfo{year}{2004}): \emph{\bibinfo{title}{A brief survey of quantum
  programming languages}}.
\newblock In: {\sl \bibinfo{booktitle}{Proceedings of the 7th International
  Symposium on Functional and Logic Programming}}, \bibinfo{volume}{2998},
  \bibinfo{publisher}{Springer Lecture Notes in Computer Science},
  \bibinfo{address}{Nara, Japan}, pp. \bibinfo{pages}{1--6},
  \doi{10.1.1.94.463}.

\bibitemdeclare{article}{Sel04b}
\bibitem{Sel04b}
\bibinfo{author}{Peter \surnamestart Selinger\surnameend}
  (\bibinfo{year}{2004}): \emph{\bibinfo{title}{Towards a quantum programming
  language}}.
\newblock {\sl \bibinfo{journal}{Mathematical Structures in Computer Science}}
  \bibinfo{volume}{14}(\bibinfo{number}{4}), pp. \bibinfo{pages}{527--586},
  \doi{10.1.1.144.6380}.

\bibitemdeclare{inproceedings}{Sel05}
\bibitem{Sel05}
\bibinfo{author}{Peter \surnamestart Selinger\surnameend}
  (\bibinfo{year}{2007}): \emph{\bibinfo{title}{Dagger compact closed
  categories and completely positive maps}}.
\newblock In: {\sl \bibinfo{booktitle}{Proceedings of the 3rd International
  Workshop on Quantum Programming Languages (QPL 2005)}},
  \bibinfo{volume}{170}, \bibinfo{publisher}{Electronic Notes in Theoretical
  Computer Science}, \bibinfo{address}{Chicago}, pp. \bibinfo{pages}{139--163},
  \doi{10.1.1.134.2476}.

\bibitemdeclare{incollection}{Sel10}
\bibitem{Sel10}
\bibinfo{author}{Peter \surnamestart Selinger\surnameend}
  (\bibinfo{year}{2010}): \emph{\bibinfo{title}{A survey of graphical languages
  for monoidal categories}}.
\newblock In \bibinfo{editor}{Bob \surnamestart Coecke\surnameend}, editor:
  {\sl \bibinfo{booktitle}{New Structures for Physics}},
  \bibinfo{publisher}{Springer Lecture Notes in Physics},
  \doi{10.1007/978-3-642-12821-9\_4}.
\newblock \bibinfo{note}{(arXiv:0908.3347v1 [math.CT])}.

\bibitemdeclare{article}{SV06}
\bibitem{SV06}
\bibinfo{author}{Peter \surnamestart Selinger\surnameend} \&
  \bibinfo{author}{Beno\^{i}t \surnamestart Valiron\surnameend}
  (\bibinfo{year}{2006}): \emph{\bibinfo{title}{A lambda calculus for quantum
  computation with classical control}}.
\newblock {\sl \bibinfo{journal}{Mathematical Structures in Computer Science}}
  \bibinfo{volume}{16}(\bibinfo{number}{3}), pp. \bibinfo{pages}{527--552},
  \doi{10.1017/S0960129506005238}.
\newblock \bibinfo{note}{(arXiv:cs/0404056v2 [cs.LO])}.

\bibitemdeclare{inproceedings}{SV08}
\bibitem{SV08}
\bibinfo{author}{Peter \surnamestart Selinger\surnameend} \&
  \bibinfo{author}{Beno\^{i}t \surnamestart Valiron\surnameend}
  (\bibinfo{year}{2008}): \emph{\bibinfo{title}{A linear-non-linear model for a
  computational call-by-value lambda calculus (extended abstract)}}.
\newblock In: {\sl \bibinfo{booktitle}{Proceedings of the Eleventh
  International Conference on Foundations of Software Science and Computation
  Structures (FOSSACS 2008)}}, \bibinfo{volume}{4962},
  \bibinfo{publisher}{Springer Lecture Notes in Computer Science},
  \bibinfo{address}{Budapest}, pp. \bibinfo{pages}{81--96},
  \doi{10.1007/978-3-540-78499-9\_7}.
\newblock \bibinfo{note}{(arXiv:0801.0813v1 [cs.LO])}.

\bibitemdeclare{incollection}{SV10}
\bibitem{SV10}
\bibinfo{author}{Peter \surnamestart Selinger\surnameend} \&
  \bibinfo{author}{Beno\^{i}t \surnamestart Valiron\surnameend}
  (\bibinfo{year}{2010}): \emph{\bibinfo{title}{Quantum lambda calculus}}.
\newblock In \bibinfo{editor}{S.~\surnamestart Gay\surnameend} \&
  \bibinfo{editor}{I.~\surnamestart Mackie\surnameend}, editors: {\sl
  \bibinfo{booktitle}{Semantic Techniques in Quantum Computation}},
  \bibinfo{publisher}{Cambridge University Press}.
\newblock
  \bibinfo{note}{Http://www.mscs.dal.ca/~selinger/papers.html\#qlambdabook}.

\bibitemdeclare{article}{vT04}
\bibitem{vT04}
\bibinfo{author}{Andr\'{e} \surnamestart van Tonder\surnameend}
  (\bibinfo{year}{2004}): \emph{\bibinfo{title}{A Lambda Calculus for Quantum
  Computation}}.
\newblock {\sl \bibinfo{journal}{SIAM Journal on Computing}}
  \bibinfo{volume}{33}(\bibinfo{number}{5}), pp. \bibinfo{pages}{1109--1135},
  \doi{10.1137/S0097539703432165}.
\newblock \bibinfo{note}{(arXiv:quant-ph/0307150v5)}.

\bibitemdeclare{misc}{vTD03}
\bibitem{vTD03}
\bibinfo{author}{Andr\'{e} \surnamestart van Tonder\surnameend} \&
  \bibinfo{author}{Miquel \surnamestart Dorca\surnameend}
  (\bibinfo{year}{2003}): \emph{\bibinfo{title}{Quantum Computation,
  Categorical Semantics and Linear Logic}}.
\newblock \bibinfo{howpublished}{Archive}.
\newblock \bibinfo{note}{(arXiv:quant-ph/0312174v4)}.

\end{thebibliography}

\clearpage
\appendix
\section{Appendix}
\subsection{Correspondence to dagger compact categories}
\label{Subsection:Correspondence to dagger compact categories}
\subsubsection{The syntactic category}
\label{Subsubsection:The syntactic category}
\SyntacticCategory
\begin{proof}
As we noticed during the proof of the subject reduction property, soup reduction rules do not affect our language's type assignments. Consequently, the type of the premises used by a sequent will be the same across all sequents in a given denotation. Similarly, the type of the conclusion produced by a sequent will be the same across all sequents in a given denotation. For any sequent $\Gamma \vdash_S b:B$, corresponding to a denotation $[\pi_1]$, we will say that its \textit{domain} is $\Gamma$ and its \textit{codomain} is $B$, writing this as $[\pi_1] : \Gamma \rightarrow B$.

Let $[f]:A \rightarrow B$ and $[g]:B \rightarrow C$ be denotations representing the soup equivalent forms of some sequents $a:A \vdash_{S_1} b:B$ and $b^\prime:B \vdash_{S_2} c:C$ respectively. For any two such denotations, where the codomain of the first matches the domain of the second, we will define a \textit{composition} operator $\circ$ that can combine them into $[g] \circ [f] : A \rightarrow C$. The new denotation will represent all the soup equivalent forms of the sequent that is generated by combining the two sequents using the Cut rule:
\begin{prooftree}
    \AxiomC{$a:A \vdash_{S_1} b:B$}
    \AxiomC{$b^\prime:B \vdash_{S_2} c:C$}
    \RightLabel{\footnotesize Cut}
    \BinaryInfC{$a:A \vdash_{S_1 \cup S_2 \cup \{b : b^\prime\}} c:C$}
\end{prooftree}

The composition operation we just defined inherits associativity from the Cut rule; the order in which Cuts are performed does not matter since the connected terms are allowed to "float" freely within the soup. Therefore, $[h] \circ ([g] \circ [f]) = ([h] \circ [g]) \circ [f]$. Moreover, for every type $A$, there is a denotation $[id_A]$ that represents the sequent generated by the Identity axiom (Id): $x:A \vdash x:A$.

Composing a denotation $[f]:A \rightarrow B$ with an identity yields $[f] \circ [id_A]$ or $[id_B] \circ [f]$ depending on whether we compose with an identity on the right or on the left. The two resulting denotations represent

\begin{center}
\AxiomC{$x:A \vdash x:A$}
\AxiomC{$a:A \vdash_S b:B$}
\BinaryInfC{$x:A \vdash_{S \cup \{x : a\}} b:B$}
\DisplayProof
\hskip 0.3cm and \hskip 0.3cm
\AxiomC{$a:A \vdash_S b:B$}
\AxiomC{$x:B \vdash x:B$}
\BinaryInfC{$a:A \vdash_{S \cup \{b : x\}} x:B$}
\DisplayProof
\end{center}

\noindent both of which are soup equivalent to $a:A \vdash_S b:B$ and the rest of the sequents represented by $[f]$. Hence $[id_B] \circ [f] = [f] = [f] \circ [id_A]$
\end{proof}

\begin{definition}[Syntactic category notational conventions]
For notational convenience, we define the following combinators:\\
    \noindent $\alpha_{A,B,C} := \lambda\left(a \otimes (b \otimes c)\right).\left((a \otimes b) \otimes c\right) : \left(A \otimes (B \otimes C)\right) \multimap \left((A \otimes B) \otimes C\right)$ \\
    \noindent $\eta_A := \lambda 1.(x_* \otimes x) : I \multimap (A^* \otimes A)$ \hskip 0.7cm
    $\lambda_A := \lambda(1 \otimes a).a : (I \otimes A) \multimap A$ \hskip 0.7cm
    $\rho_A := \lambda(a \otimes 1).a : (A \otimes I) \multimap A$ \\
    \noindent $\varepsilon_A := \lambda(x \otimes x_*).1 : (A \otimes A^*) \multimap I$ \hskip 0.7cm
    $\sigma_{A,B} := \lambda(a \otimes b).(b \otimes a) : (A \otimes B) \multimap (B \otimes A)$
\end{definition}

\begin{theorem}[Monoidal category]
The syntactic category is a monoidal category.
\end{theorem}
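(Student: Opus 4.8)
The plan is to equip the syntactic category $\mathcal{C}_{Synt}$, whose objects are types and whose arrows are denotations, with a monoidal structure whose tensor on objects is the type constructor $\otimes$ and whose unit object is $I$. First I would define the action of the tensor bifunctor on arrows. Given denotations $[f] : A \rightarrow B$ and $[g] : C \rightarrow D$, represented by sequents $a:A \vdash_{S_1} b:B$ and $c:C \vdash_{S_2} d:D$, I would apply the $\otimes R$ rule followed by $\otimes L$ to obtain $a \otimes c : A \otimes C \vdash_{S_1 \cup S_2} b \otimes d : B \otimes D$, and set $[f] \otimes [g]$ to be its denotation. Since $\otimes R$ and $\otimes L$ respect soup equivalence, this is independent of the chosen representatives.

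Next I would verify that this assignment is genuinely functorial. Preservation of identities, $[id_A] \otimes [id_B] = [id_{A \otimes B}]$, follows by a short soup reduction. The interchange law $([f'] \otimes [g']) \circ ([f] \otimes [g]) = ([f'] \circ [f]) \otimes ([g'] \circ [g])$ is exactly where the soup rule named \emph{bifunctoriality} earns its keep: composing the two tensored sequents with Cut introduces a soup connection of the form $\{b \otimes d : b' \otimes d'\}$, which the bifunctoriality rule splits into $\{b:b', d:d'\}$, precisely recovering the right-hand side. The unit object $I$, the associator $\alpha_{A,B,C}$, and the unitors $\lambda_A$ and $\rho_A$ are already available as the combinators fixed in the syntactic-category notational conventions, so I would take these as the candidate structural morphisms.

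I would then show that $\alpha$, $\lambda$, and $\rho$ are natural isomorphisms. Invertibility is handled by exhibiting the evident inverse combinators, for instance $\lambda\left((a \otimes b) \otimes c\right).\left(a \otimes (b \otimes c)\right)$ for $\alpha^{-1}$, and checking that each composite soup-reduces to the corresponding identity denotation. Naturality amounts, in each case, to showing that the two sequents obtained from the two sides of the naturality square reduce to soup-equivalent forms. The decisive tool here is that the rewrite system has already been shown to be \emph{strongly normalising} and \emph{confluent}: every denotation has a unique normal form up to $\alpha$-equivalence, so each naturality equation reduces to computing both sides' normal forms and comparing them.

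Finally I would discharge the coherence conditions, namely the pentagon identity for $\alpha$ and the triangle identity relating $\alpha$ with $\lambda$ and $\rho$, by the same normal-form method. The main obstacle I anticipate is not conceptual but combinatorial: tracking the soup connections generated by the nested Cut, $\otimes R$ and Uncurry steps inside the large composites that appear in the pentagon, and confirming that the bifunctoriality and consumption rules resolve them to a common normal form. Confluence guarantees that whichever reduction order one follows the outcome is the same, which keeps this bookkeeping finite and deterministic, and reduces the entire verification to a finite collection of normal-form comparisons.
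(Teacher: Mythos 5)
Your proposal matches the paper's proof in all essentials: the tensor on arrows is defined by applying the $\otimes R$ rule (the paper leaves the $\otimes L$ step implicit) to representative sequents, and the interchange law, preservation of identities, pentagon and triangle conditions are all discharged by soup reduction, with the \emph{bifunctoriality} rule doing exactly the work you assign to it. The extra care you take over well-definedness on equivalence classes, invertibility and naturality of $\alpha$, $\lambda_A$, $\rho_A$, and the appeal to strong normalisation and confluence to organise the normal-form comparisons, only fills in details the paper's sketch omits.
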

\begin{proof}
Let $[f]:A \rightarrow B$ and $[g]:C \rightarrow D$ be denotations representing the soup equivalent forms of $a:A \vdash_{S_1} b:B$ and $c:C \vdash_{S_2} d:D$. For any such $[f]$ and $[g]$, we define a monoidal product $[f] \otimes [g] : A \otimes B \rightarrow C \otimes D$. The product represents all the soup equivalent sequents generated by using the right tensor rule to combine the sequents for $[f]$ and $[g]$. We can now use soup reduction to show that $([g] \circ [f]) \otimes ([t] \circ [h]) = ([g] \otimes [t]) \circ ([f] \otimes [h])$, $[id_A] \otimes [id_B] = [id_{A \otimes B}]$, $[\alpha_{A \otimes B, C, D}] \circ [\alpha_{A,B,C \otimes D}] = ([\alpha_{A,B,C}] \otimes [id_D]) \circ [\alpha_{A,B \otimes C,D}] \circ ([id_A] \otimes [\alpha_{B,C,D}])$, and $([\rho_A] \otimes [id_B]) \circ [\alpha_{A,I,B}] = [id_A] \otimes [\lambda_B]$. The syntactic category, therefore, satisfies all of the requirements and coherence conditions of a monoidal category.
\end{proof}

\begin{theorem}[Symmetric monoidal category]
The syntactic category is a symmetric monoidal category.
\end{theorem}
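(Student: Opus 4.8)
The plan is to build directly on the monoidal structure just established, exhibiting the swap combinator $\sigma_{A,B} := \lambda(a \otimes b).(b \otimes a)$ as the symmetry natural isomorphism and verifying the symmetric monoidal axioms by soup reduction, exactly as was done for the associator and unitors in the monoidal case. Concretely, for each pair of types $A$ and $B$ I would take the denotation $[\sigma_{A,B}] : A \otimes B \rightarrow B \otimes A$ and show that it assembles into a natural isomorphism whose components are coherent with the tensor product and with the associator $\alpha$.

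First I would establish naturality: for denotations $[f] : A \rightarrow A^\prime$ and $[g] : B \rightarrow B^\prime$, I would show $[\sigma_{A^\prime,B^\prime}] \circ ([f] \otimes [g]) = ([g] \otimes [f]) \circ [\sigma_{A,B}]$. Both composites are formed using the Cut and $\otimes R$ rules, so each corresponds to a single sequent; unfolding the definitions of $\sigma$, of composition, and of the monoidal product produces soup connections that, after applying \emph{bifunctoriality} and \emph{consumption}, wire the outputs of $f$ and $g$ into swapped positions in either case, so the two sequents are soup equivalent and the denotations coincide. Next I would verify the symmetry (self-inverse) condition $[\sigma_{B,A}] \circ [\sigma_{A,B}] = [id_{A \otimes B}]$: composing the two swaps via Cut yields a soup in which the intermediate variables are bound and can be consumed, collapsing the composite to the identity denotation, which simultaneously exhibits $[\sigma_{A,B}]$ as invertible with inverse $[\sigma_{B,A}]$.

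Finally I would check the hexagon coherence condition relating $\sigma$ and $\alpha$, namely that $[\alpha_{B,C,A}] \circ [\sigma_{A, B \otimes C}] \circ [\alpha_{A,B,C}]$ equals $([id_B] \otimes [\sigma_{A,C}]) \circ [\alpha_{B,A,C}] \circ ([\sigma_{A,B}] \otimes [id_C])$; again each side reduces to a single sequent whose soup, once \emph{bifunctoriality} splits the tensored connections into atomic ones and \emph{consumption} resolves the bound variables, realises the same permutation of the three tensor factors, so the two denotations agree. Because the symmetry condition $[\sigma_{B,A}] \circ [\sigma_{A,B}] = [id_{A \otimes B}]$ has already been established, this single hexagon suffices: the second braided hexagon follows from it, so the syntactic category is genuinely symmetric rather than merely braided.

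I expect the hexagon verification to be the main obstacle. Naturality and the self-inverse property each involve only two tensor factors and a short chain of \emph{bifunctoriality} and \emph{consumption} steps, but the hexagon mixes three occurrences of $\sigma$ and of $\alpha$ across three components, so the bookkeeping of which variables are captured, and of the order in which the soup connections are split and consumed, is considerably more delicate. The crucial point to get right is that soup connections \emph{float} freely within the soup, so the order in which Cut introduces connections is immaterial; this is precisely what guarantees that both sides of the hexagon induce the same underlying wiring up to $\alpha$-equivalence, and hence the same denotation.
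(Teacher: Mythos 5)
Your proposal is correct and takes essentially the same route as the paper: exhibit $[\sigma_{A,B}]$ as the symmetry and verify the symmetric monoidal axioms by reducing both sides of each equation to soup-equivalent sequents. The only differences are in the checklist --- the paper verifies the involution $[\sigma_{B,A}]\circ[\sigma_{A,B}]=[id_{A\otimes B}]$, the unit compatibility $[\rho_A]=[\lambda_A]\circ[\sigma_{A,I}]$ and one hexagon, whereas you verify naturality, the involution and one hexagon (which suffices, since the unit compatibility is derivable from the hexagons plus naturality); note only that under the paper's orientation $\alpha_{A,B,C}\colon A\otimes(B\otimes C)\to(A\otimes B)\otimes C$ your hexagon as written does not typecheck and its associators must be inverted.
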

\begin{proof}
We can use soup reduction to show that $[\sigma_{B,A}] \circ [\sigma_{A,B}] = [id_{A \otimes B}]$, $[\rho_A] = [\lambda_A] \circ [\sigma_{A,I}]$, and $[\alpha_{C,A,B}] \circ [\sigma_{A \otimes B,C}] \circ [\alpha_{A,B,C}] = ([\sigma_{A,C}] \otimes [id_B]) \circ [\alpha_{A,C,B}] \circ ([id_A] \otimes [\sigma_{B,C}])$. The syntactic category thus satisfies all of the requirements and coherence conditions of a symmetric monoidal category.
\end{proof}

\begin{theorem}[Compact closure]
The syntactic category is a compact closed category.
\end{theorem}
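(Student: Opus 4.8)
The plan is to build on the symmetric monoidal structure established in the previous theorem and exhibit, for every type $A$, a dual object equipped with unit and counit morphisms satisfying the triangle identities. I would take the dual of $A$ to be its linear negation $A^*$, and use the combinators already recorded among the syntactic category notational conventions as the structure maps: the unit $[\eta_A] : I \rightarrow A^* \otimes A$ arising from $\eta_A = \lambda 1.(x_* \otimes x)$, and the counit $[\varepsilon_A] : A \otimes A^* \rightarrow I$ arising from $\varepsilon_A = \lambda(x \otimes x_*).1$. Since these are denotations, they automatically respect soup equivalence, so the entire verification reduces to checking equalities of denotations, exactly as in the symmetric monoidal proof.

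The core of the argument is to verify the two zig-zag (snake) equations. Writing the first as
\[ [\lambda_A] \circ ([\varepsilon_A] \otimes [id_A]) \circ [\alpha_{A,A^*,A}]^{-1} \circ ([id_A] \otimes [\eta_A]) \circ [\rho_A]^{-1} = [id_A], \]
together with its mirror image for $A^*$, I would realise each composite as a single sequent: unfold every combinator via $\lambda a.b := a_* \otimes b$, assemble the composite by repeated Cut and right-tensor applications, and then drive the resulting soup to normal form. Threading an input $a:A$ through the composite, the unit $\eta_A$ introduces a fresh bound pair $x_* \otimes x$, the associator and unitors reshape $a \otimes (x_* \otimes x)$ into $(a \otimes x_*) \otimes x$, and the counit $\varepsilon_A$ matches its pattern $y \otimes y_*$ against $a \otimes x_*$. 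Splitting the tensored connections by \emph{bifunctoriality} and then resolving them by \emph{consumption} should identify $x$ with $a$ and collapse the loop, leaving precisely the sequent represented by $[id_A]$.

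The main obstacle will be the bookkeeping through this soup reduction: I must track how $\alpha$, $\lambda_A$ and $\rho_A$ reassociate the three copies of $A$ (respectively $A^*$) so that the $\eta$-introduced pair aligns with the $\varepsilon$-consumed pattern, and confirm that the cancellation leaves no residual scalar. It is essential here that the zig-zag yield $[id_A]$ on the nose rather than $[D_A] \cdot [id_A]$; the crucial point is that the counit connects $x$ to the \emph{distinct} variable $a$, not to itself, so the \emph{trace} rule $\{x :_A x\} \longrightarrow \{D_A : 1\}$ never fires and only \emph{bifunctoriality} and \emph{consumption} are used. Once both triangle identities are confirmed by soup reduction, the symmetric monoidal category of the previous theorem acquires the stated duals, units and counits, and is therefore compact closed.
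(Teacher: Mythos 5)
Your proposal matches the paper's proof: the paper establishes compact closure in exactly the same way, by verifying the two yanking equations for $[\eta_A]$ and $[\varepsilon_A]$ through soup reduction of the composite sequents down to identities (your added observation that only \emph{bifunctoriality} and \emph{consumption} fire, so no dimension scalar appears, is a correct refinement of the paper's sketch). The only discrepancy is the direction of the associator in the first snake equation: with the paper's convention $\alpha_{A,B,C} : A \otimes (B \otimes C) \multimap (A \otimes B) \otimes C$, the composite requires $[\alpha_{A,A^*,A}]$ rather than its inverse.
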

\begin{proof}
Using our soup reduction rules, we can show that $[\lambda_A] \circ ([\varepsilon_A] \otimes [id_A]) \circ [\alpha_{A,A^*,A}] \circ ([id_A] \otimes [\eta_A]) \circ [\rho_A]^{-1} = [id_A]$ and $[\rho_{A^*}] \circ ([id_{A^*}] \otimes [\varepsilon_A]) \circ [\alpha_{A^*,A,A^*}]^{-1} \circ ([\eta_A] \otimes [id_{A^*}]) \circ [\lambda_{A^*}]^{-1} = [id_{A^*}]$, by reducing the sequents represented by the denotations on the left hand sides to identities. The syntactic category thus satisfies both of the yanking conditions that are required of a compact closed category.
\end{proof}

\DaggerCompactClosure
\begin{proof}
For every denotation $[f] : A \rightarrow B$, we define its dagger $[f]^\dag : B \rightarrow A$, as the denotation representing the soup equivalent sequents of the $\dag$-flipped sequents for $[f]$. It is now easy to show that $([f]^\dag)^\dag = [f]$ and $[\sigma_{A,A^*}] \circ [\varepsilon_A]^\dag = [\eta_A]$, by showing that the sequents they represent are soup equivalent. The syntactic category, therefore, satisfies all of the requirements of a dagger compact category.
\end{proof}

\subsection{Example}
We will examine the differences in representation between teleportation\footnote{Our analysis will not include the unitary corrections that are typically applied at the end of the teleportation protocol, as the classical control they require is beyond the scope of this paper.} of a single state and teleportation of an entire function. The "yanking" action of teleportation can be witnessed by considering the reduction:
\begin{align*}
    x_1:T & \vdash_{\{x_1 \otimes x_{2*} \otimes 1 : \varepsilon, \eta : 1 \otimes x_{2*} \otimes x_3\}} x_3:T \\
    x_1:T & \vdash_{\{x_1 \otimes x_{2*} \otimes 1 : x_4 \otimes x_{4*} \otimes 1, \eta : 1 \otimes x_{2*} \otimes x_3\}} x_3:T \\
    x_1:T & \vdash_{\{x_1:x_4, x_{2*}:x_{4*}, 1:1, \eta : 1 \otimes x_{2*} \otimes x_3\}} x_3:T \\
    x_1:T & \vdash_{\{x_{2*}:x_{1*}, \eta : 1 \otimes x_{2*} \otimes x_3\}} x_3:T \\
    x_1:T & \vdash_{\{\eta : 1 \otimes x_{1*} \otimes x_3\}} x_3:T \\
    x_1:T & \vdash_{\{1 \otimes x_{5*} \otimes x_5 : 1 \otimes x_{1*} \otimes x_3\}} x_3:T \\
    x_1:T & \vdash_{\{1:1, x_{5*}:x_{1*}, x_5:x_3\}} x_3:T \\
    x_1:T & \vdash_{\{x_1:x_3\}} x_3:T \\
    x_1:T & \vdash x_1:T
\end{align*}
For a state of type $A$, we could replace the type $T$ with $A$ and leave the rest of the sequents in the derivations as they are. Similarly, for a function of type $A \multimap B$, we could replace $T$ with $A \multimap B$ and keep the rest of the derivation intact. This reveals the power of the dagger lambda calculus; we are essentially using the same syntax to represent all types of teleportation.

\end{document}